\def\squareforqed{\hbox{\rlap{$\sqcap$}$\sqcup$}}
\def\qed{\ifmmode\squareforqed\else{\unskip\nobreak\hfil
\penalty50\hskip1em\null\nobreak\hfil\squareforqed
\parfillskip=0pt\finalhyphendemerits=0\endgraf}\fi}
\def\endenv{\ifmmode\;\else{\unskip\nobreak\hfil
\penalty50\hskip1em\null\nobreak\hfil\;
\parfillskip=0pt\finalhyphendemerits=0\endgraf}\fi}
\newtheorem{theorem}{Theorem}
\newtheorem{definition}[theorem]{Definition}
\newtheorem{lemma}[theorem]{Lemma}
\newtheorem{proposition}[theorem]{Proposition}
\newcommand{\nc}{\newcommand}
\nc{\rnc}{\renewcommand}
\nc{\bra}[1]{\langle#1|}
\nc{\ket}[1]{|#1\rangle}
\nc{\ketbra}[2]{|#1\rangle\!\langle #2|}
\nc{\braket}[2]{\langle #1 | #2  \rangle}
\nc{\proj}[1]{| #1\rangle\!\langle #1 |}
\nc{\avg}[1]{\langle#1\rangle}
\nc{\sfrac}[2]{\mbox{$\frac{#1}{#2}$}}
\nc{\ox}{\otimes}
\nc{\dg}{\dagger}
\nc{\lbar}[1]{\overline{#1}}
\nc{\rar}{\rightarrow}
\nc{\dn}{\downarrow}
\nc{\lrar}{\longrightarrow}
\nc{\tr}{\operatorname{Tr}}
\nc{\supp}{{\operatorname{supp}}}
\nc{\Proj}{{\operatorname{proj}}}
\nc{\var}{\operatorname{var}}
\nc{\expect}{\operatorname{E}}
\nc{\Rank}{\operatorname{Rank}}
\nc{\polylog}{\operatorname{polylog}}
\nc{\id}{{\operatorname{id}}}
\nc{\err}{\operatorname{Err}}
\nc{\di}{\mathrm{d}}
\nc{\cA}{\mathcal{A}}  \nc{\cB}{\mathcal{B}}  \nc{\cC}{\mathcal{C}}
\nc{\cD}{\mathcal{D}}  \nc{\cE}{\mathcal{E}}  \nc{\cF}{\mathcal{F}}
\nc{\cG}{\mathcal{G}}  \nc{\cH}{\mathcal{H}}  \nc{\cI}{\mathcal{I}}
\nc{\cJ}{\mathcal{J}}  \nc{\cK}{\mathcal{K}}  \nc{\cL}{\mathcal{L}}
\nc{\cM}{\mathcal{M}}  \nc{\cN}{\mathcal{N}}  \nc{\cO}{\mathcal{O}}
\nc{\cP}{\mathcal{P}}  \nc{\cQ}{\mathcal{Q}}  \nc{\cS}{\mathcal{S}}
\nc{\cT}{\mathcal{T}}  \nc{\cX}{\mathcal{X}}  \nc{\cZ}{\mathcal{Z}}
\nc{\RR}{{{\mathbb R}}} \nc{\CC}{{{\mathbb C}}} \nc{\FF}{{{\mathbb F}}}
\nc{\NN}{{{\mathbb N}}} \nc{\ZZ}{{{\mathbb Z}}} \nc{\PP}{{{\mathbb P}}}
\nc{\QQ}{{{\mathbb Q}}} \nc{\UU}{{{\mathbb U}}} \nc{\EE}{{{\mathbb E}}}
\begin{document}

\title{Discriminating quantum states: the multiple Chernoff distance}
\author{Ke Li \\ \\IBM T.J. Watson Research Center \\ and Massachusetts Institute of Technology}
\date{}
\maketitle

\begin{abstract}
  We consider the problem of testing multiple quantum hypotheses $\{\rho_1^{\ox n},\ldots,
  \rho_r^{\ox n}\}$, where an arbitrary prior distribution is given and each of the $r$
  hypotheses is $n$ copies of a quantum state. It is known that the minimal average error
  probability $P_e$ decays exponentially to zero, that is, $P_e=\exp\{-\xi n+o(n)\}$.
  However, this error exponent $\xi$ is generally unknown, except for the case that $r=2$.

  In this paper, we solve the long-standing open problem of identifying the above error
  exponent, by proving Nussbaum and Szko{\l}a's conjecture that $\xi=\min_{i\neq j}C(\rho_i,
  \rho_j)$. The right-hand side of this equality is called the multiple quantum Chernoff
  distance, and $C(\rho_i,\rho_j):=\max_{0\leq s\leq 1}\{-\log\tr\rho_i^s\rho_j^{1-s}\}$
  has been previously identified as the optimal error exponent for testing two hypotheses,
  $\rho_i^{\ox n}$ versus $\rho_j^{\ox n}$.

  The main ingredient of our proof is a new upper bound for the average error probability,
  for testing an ensemble of finite-dimensional, but otherwise general, quantum states.
  This upper bound, up to a states-dependent factor, matches the multiple-state generalization
  of Nussbaum and Szko{\l}a's lower bound. Specialized to the case $r=2$, we give an
  alternative proof to the achievability of the binary-hypothesis Chernoff distance, which
  was originally proved by Audenaert et al.
\end{abstract}

\begin{table}[b]
\rule{3cm}{0.01cm} \newline
\par
{\small Email: carl.ke.lee@gmail.com}
\newline{\small  Supported by NSF grants CCF-1110941 and CCF-1111382.}
\newline{\small \textit{AMS 2000 subject classifications.} 62P35, 62G10.}
\newline{\small \textit{Key words and phrases.} quantum state discrimination, quantum hypothesis
testing, error exponent, quantum Chernoff distance, multiple hypotheses}\end{table}

\section{Introduction}
   \label{section:introduction}
A basic problem in information theory and statistics, is to test a system that may be prepared
in one of $r$ random states. Treated in the framework of quantum mechanics, the testing
is performed via quantum measurement, and the physical states are described by density matrices
$\omega_1, \omega_2,\ldots, \omega_r$, namely, positive semidefinite Hermitian matrices of
trace 1. It is a notable fact that, when $\omega_i\text{'s}$ commute, the problem reduces to
classical statistical testing, among $r$ probability distributions that are given by the arrays
of eigenvalues of each of the density matrices. However, the generally noncommutative feature
makes quantum statistics much richer than its classical counterpart.

Our main focus in the current paper will be on the asymptotic setting. Let the tensor product
state $\rho^{\ox n}$ denotes $n$ independent copies of $\rho$, in analogy to the probability
distribution of i.i.d. random variables. We are interested in the asymptotic behavior of the
average error $P_e$, in discriminating a set of quantum states $\{\rho_1^{\ox n},\ldots,
\rho_r^{\ox n}\}$, when an arbitrary prior that is independent of $n$ is given. Parthasarathy
showed that $P_e$ decays exponentially, that is, $P_e=\exp\{-\xi n+o(n)\}$~\cite{Parth01}.
However, to date the optimal error exponent $\xi$, as a functional of the states $\rho_1,\dots,
\rho_r$, is generally unknown.

Significant achievements have been made for the case of testing two quantum hypotheses ($r=2$).
In two breakthrough papers, \cite{ACMMABV} and~\cite{NussbaumSzkola06}, it has been established
that the optimal error exponent in discriminating $\rho_1^{\ox n}$ and $\rho_2^{\ox n}$, equals
the quantum Chernoff distance
\[C(\rho_1,\rho_2):=\max_{0\leq s\leq 1}\{-\log\tr\rho_1^s\rho_2^{1-s}\}.
\]
Audenaert et al in~\cite{ACMMABV} solved the achievability part, in the meantime Nussbaum
and Szko{\l}a in~\cite{NussbaumSzkola06} proved the optimality part. This provides
the quantum generalization of the Chernoff information as the optimal error exponent in classical
hypotheses testing~\cite{Chernoff52}; see also~\cite{CT91}.

The solution for the general cases $r>2$ is still lacking and it does not follow from the
binary case directly. The optimal tests, as analogs of the classical maximum likelihood decision
rule, have been formulated in the 1970s. For discriminating two states it has an explicit
expression known as the Holevo?Helstrom test~\cite{Helstrom76, Holevo78}, and indeed, the proof
in~\cite{ACMMABV} relies on a nontrivial application of this Holevo?Helstrom test. In contrast,
for discriminating multiple quantum states the corresponding optimal measurement can only be
formulated in a very complicated, implicit way~\cite{Holevo73, YKL75}. Such a situation illustrates
the difficulty in dealing with the asymptotic error exponent, for the multiple case $r>2$.
Intuitively, competitions among pairs make the problem complicated.

Nussbaum and Szko{\l}a introduced the multiple quantum Chernoff distance
\[C(\rho_1,\ldots,\rho_r):=\min_{(i,j):i\neq j}C(\rho_i,\rho_j),
\]
and conjectured that it is the optimal asymptotic error exponent, in discriminating quantum
states $\rho_1^{\ox n},\ldots, \rho_r^{\ox n}$~\cite{NussbaumSzkola10,NussbaumSzkola11a,NussbaumSzkola11}.
This is in full analogy to the existing results in classical statistical hypothesis
testing~\cite{LeangJohnson97,Salikh73,Salikh98,Torgersen81}. Significant progress has been
made towards proving this conjecture. Besides the case of commuting states which reduces to the
classical situation, it has been proven to be true in several interesting special cases. These
include when the supporting spaces of the states $\rho_1,\ldots,\rho_r$ are disjoint~\cite{NussbaumSzkola11},
and when one pair of the states is substantially closer than the other pairs, in Chernoff
distance~\cite{Nussbaum13,Aud-Mosonyi14}. In general, Nussbaum and Szko{\l}a showed that the
optimal error exponent $\xi$ in testing multiple quantum hypotheses, satisfies
$C/3\leq\xi\leq C$~\cite{NussbaumSzkola11}, and Audenaert and Mosonyi recently strengthened
this bound, showing that $C/2\leq\xi\leq C$~\cite{Aud-Mosonyi14}.

In this paper, we shall prove the aforementioned conjecture, that is, we show that the long-sought
error exponent in asymptotic quantum (multiple) state discrimination, is given by the (multiple)
quantum Chernoff distance. Besides, as a main ingredient of the proof we derive a new upper bound
for the optimal average error probability, for discriminating a set of finite-dimensional, but
otherwise general, quantum states. This one-shot upper bound has the advantage that, up to a
states-dependent factor, it coincides with a multiple-state generalization of Nussbaum and Szko\l a's
lower bound~\cite{NussbaumSzkola06}.

Before concluding this section, we review the relevant literature.
Asymptotics of statistical hypothesis testing is an important topic in statistics and information
theory, and is especially useful in identifying basic information quantities. We refer the
interested readers for a partial list of classical results to~\cite{Blahut74, Chernoff52, CT91,
CsiszarLongo71, HanKobayashi89, Hoeffding65, LeangJohnson97}, and quantum results to~\cite{ACMMABV,
ANSV08,BDKSSS05,BrandaoPlenio09,HiaiPetz91,Li12,MosonyiOgawa13,NussbaumSzkola06, OgawaNagaoka00,
TomamichelHayashi12}. The optimal or approximately optimal average error in quantum state
discrimination, and the corresponding tests to achieve it, is a basic problem in quantum
information theory and has attracted extensive study; see, for example, \cite{Aud-Mosonyi14,
BarnumKnill02, HausladenWootters94, Helstrom76, Holevo73, Holevo78, KRS09, Qiu08, Tyson09, YKL75}.

The remainder of this paper is organized as follows. After introducing some basic notations, concepts
and the relevant aspects of the quantum formalism in Section~\ref{section:notations-preliminaries},
we present the main results in Section~\ref{section:results}. Section~\ref{section:proofs} is
dedicated to the proofs. At last, in Section~\ref{section:discussions}, we conclude the paper with
some discussion and open questions.

\section{Notation and preliminaries}
  \label{section:notations-preliminaries}
%tensor; Chernoff bound, distance, information; error exponent;%
Let $\cB(\cH)$ denote the set of linear operators on a complex, finite-dimensional Hilbert space
$\cH$. Let $\cP(\cH)\subset\cB(\cH)$ be the set of positive semidefinite matrices, and $\cD(\cH)
:=\{\omega:\omega\in\cP(\cH), \text{and} \tr\omega=1\}$ is the set of density matrices. We say a
matrix $A\geq 0$ if $A\in\cP(\cH)$, and $A\geq B$ if $A-B\geq 0$. The dimension of the Hilbert
space $\cH$ is denoted as $|\cH|$. $\1$ denotes the identity matrix. We use the Dirac notation
$\ket{v}\in\cH$ to denote a unit vector, $\bra{v}$ its conjugate transpose, and $\braket{v}{w}$
the inner product. A Hermitian matrix $X$ can be written in the spectral decomposition form:
$X=\sum_i\lambda_iQ_i$, where $\lambda_i\text{'s}$ satisfying $\lambda_i\neq\lambda_j$ for $i\neq j$
are the eigenvalues, and $Q_i\text{'s}$ satisfying $Q_iQ_j=\delta_{ij}Q_i$ and $\sum_iQ_i=\1$ are
the orthogonal projectors onto the eigenspaces. $\supp(X)$ is the supporting space of $X$ and is
spanned by all the eigenvectors with non-zero eigenvalues, $\{X>0\}:=\sum_{i:\lambda_i>0}Q_i$
represents the projector onto the positive supporting space of $X$, and $\Omega(X):=|\{\lambda_i
\}_i|$ denotes the number of eigenspaces, or distinct eigenvalues. For a subspace $S\subset\cH$,
$\Proj(S)$ is the projector onto $S$. The sum of two subspaces $S_1,S_2\subset\cH$, is defined as
$S_1+S_2:=\{u+v|u\in S_1, v\in S_2\}$. When we say the overlap between two subspaces $S_1$ and
$S_2$, we mean the maximal overlap between two unit vectors from each of them: $\max\{|\braket{v}
{w}|:\ket{v}\in S_1,\ket{w}\in S_2\}$.

We briefly review some aspects of the quantum formalism, relevant in this paper. Every physical
system is associated with a complex Hilbert space, which is called the state space. The states
of a system are described by density matrices. Pure states are of particular interest, and are
represented by rank-one projectors, or simply the corresponding unit vectors. Throughout this
paper, we are concerned with quantum states of a finite system, associated with a finite-dimensional
Hilbert space. A density matrix $\omega$ can be decomposed as the sum of an ensemble of pure states,
that is, $\omega=\sum_ip_i\proj{\psi_i}$, with $\{p_i\}$ a probability distribution. An intuitive
understanding is that pure states represent ``deterministic events'', and a density matrix is the
quantum analogue of a probability distribution over these events. However, note that this decomposition
is not unique, and non-orthogonal pure states are not perfectly distinguishable.

The procedure of detecting the state of a quantum system is called quantum measurement, which,
in the most general form, is formulated as positive operator-valued measure (POVM), that is,
$\cM=\{M_i\}_i$, with the POVM elements satisfying $0\leq M_i \leq \1$ and $\sum_i M_i=\1$. When
performing the measurement on a system in the state $\omega$, we get outcome $i$ with
probability $\tr(\omega M_i)$. Projective measurements, or von Neumann measurements, are special
situations of POVMs, where all the POVM elements are orthogonal projectors: $M_iM_j=\delta_{ij}M_i$,
with $\delta_{ij}$ the Kronecker delta function.

Suppose a physical system, also called an information source, is in one of a finite set of hypothesized
states $\{\omega_1,\ldots, \omega_r\}$, with a given prior $\{p_1,\ldots, p_r\}$. For convenience,
we denote them as a normalized ensemble $\{A_1:=p_1\omega_1,\ldots, A_r:=p_r\omega_r\}$. To determine
the true state, we make a POVM measurement $\{M_1,\ldots, M_r\}$, and infer that it is in the state
$\omega_i$ if we get outcome $i$. The average (Bayesian) error probability is
\begin{equation}
  \label{eq:error-definition}
  P_e\left(\{A_1,\ldots,A_r\}; \{M_1,\ldots, M_r\}\right):=\sum_{i=1}^r\tr A_i(\1-M_i).
\end{equation}
Minimized over all possible measurements, this gives the optimal error probability
\begin{equation}
  \label{eq:optimal-error-definition}
  P_e^*\left(\{A_1,\ldots, A_r\}\right)\!:=
           \min\!\left\{\!\sum_{i=1}^r\tr A_i(\1-M_i)\!:\text{POVM }\! \{M_1,\ldots, M_r\}\!\! \right\}.
\end{equation}
We note here that the definitions~(\ref{eq:error-definition}) and (\ref{eq:optimal-error-definition})
apply, as well, to a non-normalized ensemble of quantum states $\{A_1,\ldots,A_r\}$ which only
satisfies the constraint $(\forall i)\, A_i\geq 0$. In this case, $P_e$ and $P_e^*$ may not have a
clear meaning but sometimes can be useful.

In the asymptotic setting where $\omega_i$ is replaced by the tensor product state $\rho_i^{\ox n}$,
we are interested in the behavior of the optimal error $P_e^*$, as $n\rightarrow \infty$. An important
quantity characterizing this asymptotic behavior, is the rate of exponential decay, or simply
\emph{error exponent}
\[\liminf_{n\rar\infty}\frac{-1}{n}\log P_e^*\left(\{p_1\rho_1^{\ox n},\ldots,p_r\rho_r^{\ox n}\}\right).
\]

\section{Results}
  \label{section:results}
Our main result is the following Theorem~\ref{thm:quantum-multiple-Chernoff}. Recall that, for
the case $r=2$ of testing two hypotheses, it has been proven nearly a decade ago in
2006; see~\cite{ACMMABV} and~\cite{NussbaumSzkola06}.

\begin{theorem}
  \label{thm:quantum-multiple-Chernoff}
  Let $\{\rho_1,\ldots,\rho_r\}$ be a finite set of quantum states on a finite-dimensional Hilbert
  space $\cH$. Then the asymptotic error exponent for testing $\{\rho_1^{\ox n},\ldots,\rho_r^{\ox n}\}$,
  for an arbitrary prior $\{p_1,\ldots,p_r\}$, is given by the multiple quantum Chernoff distance:
  \begin{equation}
    \label{eq:error-exponent}
    \lim_{n\rar\infty}\frac{-1}{n}\log P_e^*\left(\{p_1\rho_1^{\ox n},\ldots,p_r\rho_r^{\ox n}\}\right)
    \!=\!\min_{(i,j):i\neq j}\max_{0\leq s\leq 1}\left\{\!-\log\tr\rho_i^s\rho_j^{1-s}\!\right\}.
  \end{equation}
\end{theorem}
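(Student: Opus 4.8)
The plan is to establish the equality in~(\ref{eq:error-exponent}) by proving its two matching halves separately: the converse $\xi\leq C$ and the achievability $\xi\geq C$, where I abbreviate $C:=\min_{(i,j):i\neq j}C(\rho_i,\rho_j)$ and let $\xi$ denote the left-hand side of~(\ref{eq:error-exponent}). For the converse I would reduce to the binary case. For any POVM $\{M_1,\ldots,M_r\}$ and any fixed pair $(i,j)$, dropping the nonnegative terms $k\neq i,j$ from~(\ref{eq:error-definition}) gives $\sum_k\tr A_k(\1-M_k)\geq\tr A_i(\1-M_i)+\tr A_j(\1-M_j)$; since $M_i+M_j\leq\1$ we have $\1-M_j\geq M_i$, so the right-hand side is at least $\tr A_i(\1-M_i)+\tr A_j M_i$, which is exactly the error of the two-outcome test $\{M_i,\1-M_i\}$ applied to $\{A_i,A_j\}$, and hence at least $P_e^*(\{A_i,A_j\})$. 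Minimizing over POVMs yields the one-shot reduction $P_e^*(\{A_1,\ldots,A_r\})\geq\max_{i\neq j}P_e^*(\{A_i,A_j\})$. Setting $A_k=p_k\rho_k^{\ox n}$ and invoking the known binary result of Nussbaum and Szko{\l}a, under which $-\tfrac1n\log P_e^*(\{p_i\rho_i^{\ox n},p_j\rho_j^{\ox n}\})\rar C(\rho_i,\rho_j)$, I obtain $\xi\leq C(\rho_i,\rho_j)$ for every pair, and therefore $\xi\leq C$.

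The substance of the theorem is the achievability $\xi\geq C$, for which the optimal $r$-outcome measurement has no workable closed form. Here the plan is to prove a \emph{one-shot} upper bound, valid for an arbitrary finite-dimensional ensemble $\{A_1,\ldots,A_r\}$, of the shape
\[
  P_e^*(\{A_1,\ldots,A_r\})\;\leq\; K\sum_{(i,j):i\neq j}\;\inf_{0\leq s\leq 1}\tr\!\left(A_i^s A_j^{1-s}\right),
\]
where the prefactor $K$ depends on the ensemble only through the numbers of distinct eigenvalues $\Omega(A_i)$. The sum on the right is the natural multiple-state extension of the quantity governing Nussbaum and Szko{\l}a's lower bound, so that, up to the factor $K$, the new upper bound meets that lower bound.

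To prove such a bound I would construct an explicit measurement for $\{A_i\}$ and bound its average error by a sum of binary contributions, one per ordered pair, each estimated through the operator inequality $\tr(A+B-|A-B|)\leq 2\tr(A^sB^{1-s})$ for $A,B\geq 0$ and $0\leq s\leq 1$ that underlies the binary achievability of Audenaert et al. The obstruction is the mutual noncommutativity of the $r$ states, which blocks a clean pairwise decomposition of a single global measurement. My plan to overcome this is to pinch the states onto their eigenprojectors (or onto a common refinement of them), invoking the pinching inequality $X\leq\Omega\,\cP(X)$; this makes the relevant operators simultaneously diagonal at the cost of the multiplicative factors $\Omega(A_i)$, which is precisely the source of $K$.

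Specializing the one-shot bound to $A_k=p_k\rho_k^{\ox n}$ then finishes the achievability: here $\tr(A_i^sA_j^{1-s})=p_i^sp_j^{1-s}(\tr\rho_i^s\rho_j^{1-s})^n$, so after optimizing $s$ the $(i,j)$ term decays with exponent $C(\rho_i,\rho_j)$, and the finite sum over pairs decays with the minimal exponent $C$. The one crucial analytic input is that $\rho^{\ox n}$ has only polynomially many distinct eigenvalues—its eigenvalues are products indexed by multiplicity vectors, so $\Omega(\rho^{\ox n})=\text{poly}(n)$—whence $K=\exp\{o(n)\}$ and the prefactor does not affect the exponent. I expect the main obstacle to be the one-shot bound itself: designing a single $r$-outcome measurement whose error splits into controllable pairwise pieces and proving the governing operator inequality so that each pair contributes its \emph{full} Chernoff term $\inf_s\tr(A_i^sA_j^{1-s})$—recovering the pair-optimal $s$ rather than only the symmetric choice $s=\tfrac12$ responsible for the earlier $C/2$ and $C/3$ bounds—while keeping the overhead $K$ subexponential in the i.i.d.\ limit.
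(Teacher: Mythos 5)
Your converse argument is correct and is exactly the reduction the paper uses, and your overall architecture for achievability --- a one-shot upper bound matching the Nussbaum--Szko{\l}a lower bound up to a prefactor controlled by the number of distinct eigenvalues, followed by type counting to make that prefactor polynomial in $n$ --- is also exactly the paper's (Theorem~\ref{thm:one-shot-achi} plus $\Omega(\rho^{\ox n})\le(n+1)^d$). The gap is in the one step you yourself flag as the main obstacle: how to actually prove the one-shot bound. The mechanism you propose --- pinching onto eigenprojectors via $X\le\Omega\,\mathcal{P}(X)$ to force commutativity, then invoking the binary operator inequality of Audenaert et al.\ pair by pair --- does not work. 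First, for $r>2$ noncommuting states there is no common pinching: the eigenprojectors of the different $A_i$ do not commute, so a ``common refinement'' does not exist, and pinching with respect to one state's eigenbasis only linearizes the pairs involving that state. Second, and more fundamentally, even for $r=2$ the pinched Chernoff quantity overshoots: since $x\mapsto x^{1-s}$ is operator concave and a pinching is unital, $\mathcal{P}(\sigma)^{1-s}\ge\mathcal{P}(\sigma^{1-s})$, hence $\tr\rho^s\mathcal{P}_\rho(\sigma)^{1-s}\ge\tr\rho^s\sigma^{1-s}$, and asymptotically the exponent extracted from the pinched pair is governed by a measured/sandwiched R\'enyi quantity which for $s\in(0,1)$ is strictly smaller than $-\log\tr\rho^s\sigma^{1-s}$ unless the states commute. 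This is precisely the loss behind the earlier $C/2$- and $C/3$-type bounds; your plan as stated reproduces that loss rather than circumventing it.

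The paper's actual route is different and uses neither the Audenaert et al.\ inequality nor pinching. It proves the sharper one-shot bound $P_e^*\le f(r,T)\sum_{(i,j):i<j}\sum_{k,\ell}\min\{\lambda_{ik},\lambda_{j\ell}\}\tr Q_{ik}Q_{j\ell}$ (which implies your target form via $\min\{a,b\}\le a^sb^{1-s}$) by an explicit geometric construction: order \emph{all} eigenvalues of \emph{all} $r$ states in a single decreasing sequence; from each eigenspace, remove (``$\epsilon$-subtract'') the directions that overlap too strongly with the earlier, larger-eigenvalue eigenspaces; then Gram--Schmidt orthogonalize the trimmed eigenspaces in that order to obtain a projective measurement. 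The removal error and the residual misidentification error are each bounded by the same pairwise overlap sum, using Lemmas~\ref{lemma:epsilon-subtraction-prop}--\ref{lemma:Projector-bound}. That global ordering-and-trimming step is the new idea your proposal is missing; without a replacement for it, the achievability half remains open in your write-up.
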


The optimality part, that is, the left-hand side of equation~(\ref{eq:error-exponent}) being upper
bounded by the right-hand side, follows easily from the optimality of the binary case
$r=2$~\cite{NussbaumSzkola06}; see the argument in~\cite{NussbaumSzkola11a}. Roughly speaking, this is
because, discriminating an arbitrary pair within a set of quantum states is easier than discriminating
all of them. On the other hand, the achievability part is the main difficulty in proving
Theorem~\ref{thm:quantum-multiple-Chernoff}. In~\cite{ACMMABV}, Audenaert et~al employed the
Holevo?Helstrom tests $\left(\{\rho_1^{\ox n}-\rho_2^{\ox n}>0\}, \1-\{\rho_1^{\ox n}-\rho_2^
{\ox n}>0\}\right)$ to achieve the binary Chernoff distance in testing $\rho_1^{\ox n}$ versus
$\rho_2^{\ox n}$. However, to date we do not have a way to generalize the method of Audenaert et~al
to deal with the $r>2$ cases, even though there is the multiple generalization of the Holevo?Helstrom
tests~\cite{Holevo73, YKL75}; see discussions in \cite{NussbaumSzkola11} and \cite{Aud-Mosonyi14}
on this issue. Here, using a conceptually different method, we derive a new upper bound for the
optimal error probability of equation~(\ref{eq:optimal-error-definition}). This one-shot error bound,
as stated in Theorem~\ref{thm:one-shot-achi}, works for testing any finite number of finite-dimensional
quantum states, and when applied in the asymptotics for i.i.d. states, accomplishes the achievability
part of Theorem~\ref{thm:quantum-multiple-Chernoff}.

Our method is inspired by the previous work of Nussbaum and Szko{\l}a~\cite{NussbaumSzkola11}.
It is shown in~\cite{NussbaumSzkola11} that if the supporting spaces of the hypothetic states
$\rho_1,\ldots,\rho_r$ are pairwise disjoint (this means that the supporting spaces of
$\rho_1^{\ox n},\ldots,\rho_r^{\ox n}$ are asymptotically highly orthogonal), then the Gram-Schmidt
orthonormalization can be employed to construct a good measurement, which achieves the error
exponent of Theorem~\ref{thm:quantum-multiple-Chernoff}. Here to prove
Theorem~\ref{thm:quantum-multiple-Chernoff} for general hypothetic states, we
find a way to remove a subspace from each eigenspace of the states $\rho_1^{\ox n},\ldots,\rho_r
^{\ox n}$. Then we show that, on the one hand this removal will cause an error that matches the
right-hand side of equation~(\ref{eq:error-exponent}) in the exponent, and on the other hand the
pairwise overlaps between the supporting spaces of $\rho_1^{\ox n},\ldots,\rho_r^{\ox n}$ are made
sufficiently small, such that the Gram-Schmidt orthonormalization method is applicable. For the
sake of generality, we will actually realize these ideas for general nonnegative matrices $A_1,
\ldots,A_r$, yielding the following Theorem~\ref{thm:one-shot-achi}.

\begin{theorem}
  \label{thm:one-shot-achi}
  Let $A_1,\ldots,A_r\in\cP(\cH)$ be nonnegative matrices on a finite-dimensional Hilbert space $\cH$.
  For all $1\leq i \leq r$, let $A_i=\sum_{k=1}^{T_i}\lambda_{ik}Q_{ik}$ be the spectral
  decomposition of $A_i$, and write $T:=\max\{T_1,\ldots,T_r\}$. There exists a function $f(r,T)$ such
  that
  \begin{equation}
    \label{eq:one-shot-achi}
    P_e^*\left(\{A_1,\ldots,A_r\}\right)
       \leq f(r,T)\sum_{(i,j):i<j}\sum_{k,\ell} \min\{\lambda_{ik},\lambda_{j\ell}\} \tr Q_{ik}Q_{j\ell}
  \end{equation}
  and we have $f(r,T)<10(r\!-\!1)^2T^2$.
\end{theorem}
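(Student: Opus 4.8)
The plan is to construct, by hand, a single POVM whose average error probability is bounded by the right-hand side of~(\ref{eq:one-shot-achi}); since $P_e^*$ is the minimum over all measurements, exhibiting one good measurement suffices. Following the strategy sketched in the introduction, I would build this measurement in two stages: first \emph{trim} each eigenspace $Q_{ik}$ down to a subprojector $\bar Q_{ik}\le Q_{ik}$ that has small overlap with the ``dominating'' eigenspaces of the \emph{other} matrices, and then \emph{orthonormalize} the collection $\{\bar Q_{ik}\}$ by Gram--Schmidt to obtain a genuine projective measurement $\{M_1,\ldots,M_r\}$, assigning to $M_i$ the span built from the eigenspaces of $A_i$. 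Two error sources arise---the mass discarded in the trimming step, and the residual non-orthogonality repaired by Gram--Schmidt---and the heart of the argument is to show that each is controlled, up to the factor $f(r,T)$, by the same pairwise sum $\sum_{i<j}\sum_{k,\ell}\min\{\lambda_{ik},\lambda_{j\ell}\}\tr Q_{ik}Q_{j\ell}$ appearing on the right.

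For the trimming, I would fix a total order on the pairs $(i,k)$ refining the decreasing order of the eigenvalues $\lambda_{ik}$ (ties broken arbitrarily), and call $(j,\ell)$ \emph{dominant} over $(i,k)$ if $j\neq i$ and $(j,\ell)$ precedes $(i,k)$, so that in particular $\lambda_{j\ell}\ge\lambda_{ik}=\min\{\lambda_{ik},\lambda_{j\ell}\}$ there. Set $R_{ik}:=\sum_{(j,\ell)\text{ dominant}}Q_{j\ell}\in\cP(\cH)$, a positive operator recording how strongly, direction by direction, $Q_{ik}$ is already claimed by the dominating eigenspaces, and let $\bar Q_{ik}$ be a subprojector of $Q_{ik}$ essentially supported on the spectral subspace of $R_{ik}$ below a fixed threshold $t>0$. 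The error incurred by replacing each $Q_{ik}$ by $\bar Q_{ik}$ is at most the total discarded mass $\sum_i\sum_k\lambda_{ik}\tr\big(Q_{ik}-\bar Q_{ik}\big)$; using the operator bound $\{R_{ik}-t\1>0\}\le t^{-1}R_{ik}$ this is at most $t^{-1}\sum_i\sum_k\lambda_{ik}\tr(Q_{ik}R_{ik})=t^{-1}\sum_i\sum_k\sum_{(j,\ell)\text{ dom}}\lambda_{ik}\tr Q_{ik}Q_{j\ell}$, and since $\lambda_{ik}=\min\{\lambda_{ik},\lambda_{j\ell}\}$ on dominant pairs, the trimming error is at most $t^{-1}$ times the target sum. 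Crucially, the same threshold forces each dominating $Q_{j\ell}\le R_{ik}$ to act as at most $t$ on the kept part, so the pairwise overlaps among the trimmed subspaces (across distinct matrices) are all bounded by $\sqrt t<1$ once $t$ is a small constant.

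With the cross-matrix overlaps uniformly bounded away from $1$, I would invoke a Gram--Schmidt orthonormalization lemma in the spirit of Nussbaum and Szko{\l}a~\cite{NussbaumSzkola11}: orthonormalizing the at most $rT$ trimmed subspaces produces mutually orthogonal subspaces defining a projective measurement, whose error is controlled by the residual trace-overlaps $\tr\bar Q_{ik}\bar Q_{j\ell}\le\tr Q_{ik}Q_{j\ell}$, each weighted by the smaller eigenvalue $\min\{\lambda_{ik},\lambda_{j\ell}\}$. Because the subspaces attached to a single matrix are automatically mutually orthogonal, only cross-matrix pairs enter---at most $(r-1)T$ competitors per eigenspace---and the conditioning of the Gram matrix contributes a further factor of the same order; this is the origin of the quadratic dependence $(r-1)^2T^2$, and optimizing the constant $t$ then yields $f(r,T)<10(r-1)^2T^2$.

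The main obstacle I anticipate is the \emph{simultaneous} control of the two error terms: trimming more aggressively (smaller $t$) shrinks the residual overlaps and hence the Gram--Schmidt error, but inflates the discarded mass by the factor $t^{-1}$, so the threshold must balance the two while keeping \emph{both} bounded by the same pairwise sum with its $\min\{\lambda_{ik},\lambda_{j\ell}\}$ weights. Since $Q_{ik}$ and $R_{ik}$ need not commute, making the trimming and the Gram--Schmidt error bound uniform in the states---so that the prefactor depends only on $r$ and $T$ and not on the eigenvalues or the dimension---is the delicate point. This uniformity is exactly what lets the one-shot bound, applied to $A_i=p_i\rho_i^{\ox n}$, reproduce the multiple quantum Chernoff exponent: the number of distinct eigenvalues $T$ of $\rho_i^{\ox n}$ grows only polynomially in $n$, so the factor $f(r,T)$ is subexponential and is absorbed into the $o(n)$ term as $n\to\infty$, matching the multiple-state generalization of Nussbaum and Szko{\l}a's lower bound.
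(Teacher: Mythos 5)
Your proposal follows essentially the same route as the paper: order all eigenspaces by decreasing eigenvalue, trim each one to reduce its overlap with the higher-ranked eigenspaces of the other matrices, orthogonalize the trimmed subspaces by Gram--Schmidt to obtain a projective measurement, and balance the discarded mass against the residual-overlap error; your one-shot trimming of $Q_{ik}$ against the aggregate $R_{ik}=\sum_{(j,\ell)\ \mathrm{dominant}}Q_{j\ell}$ is a harmless variant of the paper's sequential $\epsilon$-subtraction and delivers the same two estimates. The one quantitative correction: the orthogonalization step needs the $r$ grouped subspaces (each a direct sum of up to $T$ trimmed eigenspaces, hence with pairwise overlap up to $T\sqrt{t}$ by Lemma~\ref{lemma:bounded-overlap}) to satisfy the hypothesis $\delta<\frac{1}{2(r-1)}$ of Lemma~\ref{lemma:Projector-bound}, so $t$ cannot be ``a small constant'' but must be of order $1/((r-1)^2T^2)$; consequently the factor $(r-1)^2T^2$ enters through the $t^{-1}$ in the trimming loss rather than through the Gram-matrix conditioning, which contributes only an $O(1)$ multiplier --- a point your closing remark about balancing the two error terms already essentially accommodates.
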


Our upper bound of equation~(\ref{eq:one-shot-achi}), up to an $r$- and $T$-dependent factor, coincides
with the multiple-state generalization of the lower bound of Nussbaum and Szko{\l}a~\cite{NussbaumSzkola06}.
To see this, using the result in~\cite{Qiu08}, we easily generalize the bound obtained
in~\cite{NussbaumSzkola06} and get
\begin{equation}
  \label{eq:one-shot-opti}
  P_e^*\left(\{A_1,\ldots,A_r\}\right) \geq
    \frac{1}{2(r-1)}\sum_{(i,j):i<j}\sum_{k,\ell} \min\{\lambda_{ik},\lambda_{j\ell}\} \tr Q_{ik}Q_{j\ell}.
\end{equation}
In the case that $r=2$, it is interesting to compare equation~(\ref{eq:one-shot-achi}) with the upper
bound of Audenaert et al~\cite{ACMMABV}:
\begin{equation}
  \label{Audenaert-achi}
  P_e^*\left(\{A_1,A_2\}\right) \leq \min_{0 \leq s \leq 1}\tr A_1^sA_2^{1-s}.
\end{equation}
While we see that our bound is stronger, in the sense that
\begin{equation}
  \label{eq:bounds-comparison}
  \sum_{k,\ell} \min\{\lambda_{1k},\lambda_{2\ell}\} \tr Q_{1k}Q_{2\ell}
                                                     \leq \min_{0 \leq s \leq 1}\tr A_1^sA_2^{1-s}
\end{equation}
is always true, we also notice that it is weaker because it has an additional multiplier depending on
the number of eigenspaces of the two states.

\section{Proofs}
  \label{section:proofs}
This section is dedicated to the proofs of Theorem~\ref{thm:quantum-multiple-Chernoff} and
Theorem~\ref{thm:one-shot-achi}. At first, we present a definition and some necessary lemmas in Section~\ref{subsection:technical-preliminaries}. Then we construct the measurement for discriminating
multiple quantum states in Section~\ref{subsection:measurements-construction}. Using this measurement,
we prove Theorem~\ref{thm:one-shot-achi} in section~\ref{subsection:proof of Theorem 2}. At last, built
on Theorem~\ref{thm:one-shot-achi}, Theorem~\ref{thm:quantum-multiple-Chernoff} will be proven in Section~\ref{subsection:proof of Theorem 1}.

\subsection{Technical preliminaries}
  \label{subsection:technical-preliminaries}
We begin with the definition of the operation ``$\epsilon$-subtraction'' between two projectors or
two subspaces. This operation, say, for two subspaces $S_1$ and $S_2$, reduces the overlap between
them by removing a subspace from $S_1$, actually, in the most efficient way. It will constitute a
key step in the construction of measurement.

\begin{definition}[$\epsilon$-subtraction]
  \label{def:epsilon-subtraction}
  Let $S_1, S_2$ be two subspaces of a Hilbert space $\cH$. Let $P_1, P_2\in\cP(\cH)$ be the
  projectors onto $S_1$ and $S_2$, respectively. Write $P_1P_2P_1$ in the spectral decomposition,
  $P_1P_2P_1=\sum_x\lambda_xQ_x$, with $Q_x$ being orthogonal projectors and $\sum_xQ_x=\1_{\cH}$.
  For $0\leq\epsilon\leq 1$, the $\epsilon$-subtraction of $P_2$ from $P_1$ is defined as
  \begin{equation}
    \label{eq:subtraction-projector}
    P_1\ominus_\epsilon P_2:=P_1-\sum_{x:\lambda_x\geq\epsilon^2, \lambda_x\neq 0}Q_x.
  \end{equation}
  Accordingly, the $\epsilon$-subtraction between subspaces is defined as
  \begin{equation}
    \label{eq:subtraction-space}
    S_1\ominus_\epsilon S_2:=\supp(P_1\ominus_\epsilon P_2).
  \end{equation}
\end{definition}

Note that in equation~(\ref{eq:subtraction-projector}) the constraint $\lambda_x\neq 0$ makes sense
only when $\epsilon=0$. The following lemma states some basic properties of the $\epsilon$-subtraction.

\begin{lemma}
  \label{lemma:epsilon-subtraction-prop}
  Let $S_1, S_2$ be two subspaces of a Hilbert space $\cH$. Let $P_1, P_2\in\cP(\cH)$ be the projectors
  onto $S_1$ and $S_2$, respectively. Write $S_1^\prime=S_1\ominus_\epsilon S_2$, and
  $P_1^\prime=P_1\ominus_\epsilon P_2$. Then
  \begin{enumerate}
    \item $S_1^\prime$ is a subspace of $S_1$; $P_1^\prime$ is a projector, and $0\leq P_1^\prime\leq P_1$.
    \item $S_1^\prime$ has bounded overlap with $S_2$:
          \[\max_{\ket{v_1}\in S_1^\prime,\ket{v_2}\in S_2} |\braket{v_1}{v_2}| \leq \epsilon,
          \]
          where the maximization is over unit vectors $\braket{v_1}{v_1}=\braket{v_2}{v_2}=1$.
    \item For $0<\epsilon\leq 1$, we have
          \[\tr(P_1-P_1^\prime)\leq \frac{1}{\epsilon^2}\tr P_1P_2.
          \]
  \end{enumerate}
\end{lemma}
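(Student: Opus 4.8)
The plan is to read off all three statements directly from the spectral structure of the Hermitian operator $P_1P_2P_1$, so I would begin by recording the few facts that make the argument go. Since $P_2=P_2^2$ we have $P_1P_2P_1=(P_2P_1)^\dg(P_2P_1)\geq 0$, whence every $\lambda_x\geq 0$; and $P_1P_2P_1\leq P_1\leq\1$ forces $\lambda_x\leq 1$. Next, for every index with $\lambda_x\neq 0$ the eigenspace lies inside $S_1$, because the range of $P_1P_2P_1$ is contained in the range of $P_1$; hence $Q_x\leq P_1$ whenever $\lambda_x\neq 0$. Finally, the $Q_x$ are mutually orthogonal with $\sum_xQ_x=\1$. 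Throughout I write $I:=\{x:\lambda_x\geq\epsilon^2,\ \lambda_x\neq 0\}$, so that $P_1^\prime=P_1-R$ with $R:=\sum_{x\in I}Q_x$.

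For part (1), I would note that $R$ is a sum of mutually orthogonal projectors, each satisfying $Q_x\leq P_1$, so $R$ is itself a projector onto a subspace of $S_1$ and $P_1R=RP_1=R$. A one-line computation then gives $(P_1-R)^2=P_1-2R+R=P_1-R$, so $P_1^\prime$ is a projector; since $0\leq R\leq P_1$ we obtain $0\leq P_1^\prime\leq P_1$, and $S_1^\prime=\supp(P_1^\prime)$ is a subspace of $S_1$.

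For part (2), take unit vectors $\ket{v_1}\in S_1^\prime$ and $\ket{v_2}\in S_2$. Using $P_2\ket{v_2}=\ket{v_2}$ and Cauchy--Schwarz, $|\braket{v_1}{v_2}|^2=|\bra{v_1}P_2\ket{v_2}|^2\leq\|P_2\ket{v_1}\|^2\|P_2\ket{v_2}\|^2=\bra{v_1}P_1P_2P_1\ket{v_1}$, where the last equality uses $P_1\ket{v_1}=\ket{v_1}$ and $\|P_2\ket{v_2}\|=1$. The main point is that $\ket{v_1}\in\supp(P_1^\prime)$ is annihilated by every $Q_x$ with $x\in I$, so in $\bra{v_1}P_1P_2P_1\ket{v_1}=\sum_x\lambda_x\|Q_x\ket{v_1}\|^2$ only terms with $x\notin I$ survive, and these have $\lambda_x<\epsilon^2$ (those with $\lambda_x=0$ contribute nothing). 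Bounding each surviving $\lambda_x$ by $\epsilon^2$ and using $\sum_x\|Q_x\ket{v_1}\|^2=1$ gives $|\braket{v_1}{v_2}|^2\leq\epsilon^2$, as required.

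For part (3), the removed projector is exactly $P_1-P_1^\prime=R=\sum_{x\in I}Q_x$, so $\tr(P_1-P_1^\prime)=\sum_{x\in I}\tr Q_x$. On the other hand, $\tr P_1P_2=\tr P_1P_2P_1=\sum_x\lambda_x\tr Q_x\geq\sum_{x\in I}\lambda_x\tr Q_x\geq\epsilon^2\sum_{x\in I}\tr Q_x$, the last step using $\lambda_x\geq\epsilon^2$ on $I$; rearranging yields the stated bound. I do not expect any genuine obstacle here, as the lemma is essentially bookkeeping on eigenspaces. The only points that require care are the claim $Q_x\leq P_1$ for $\lambda_x\neq 0$ used in part (1) --- the spectral decomposition is taken over all of $\cH$, so the $\lambda_x=0$ eigenspace may protrude from $S_1$ and must be handled separately --- and the consistent use of the support condition $Q_x\ket{v_1}=0$ for $x\in I$ in part (2), which is what converts the bound $\lambda_x<\epsilon^2$ into the overlap estimate.
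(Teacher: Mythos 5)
Your proof is correct and follows essentially the same route as the paper: all three parts are read off from the spectral decomposition of $P_1P_2P_1$, using $\supp(P_1P_2P_1)\subseteq S_1$ for part (1), the bound $\bra{v_1}P_1P_2P_1\ket{v_1}\leq\epsilon^2$ on $\supp(P_1^\prime)$ for part (2), and the estimate $\lambda_x\geq\epsilon^2$ on the removed eigenspaces for part (3). Your version is merely a bit more explicit in verifying that $P_1-R$ is a projector and in spelling out the Cauchy--Schwarz step.
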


\begin{proof}
Let $P_1P_2P_1=\sum_x\lambda_xQ_x$ be the spectral decomposition of $P_1P_2P_1$, with
$0\leq\lambda_x\leq 1$.

1. Obviously, $\supp(P_1P_2P_1)\subseteq S_1$. Thus the following three projectors satisfy
\begin{equation}
    \label{eq:proof-lemma1-a}
    \sum_{x:\lambda_x\geq\epsilon^2,\lambda_x\neq 0}Q_x \leq \sum_{x:\lambda_x\neq 0}Q_x \leq P_1.
  \end{equation}
This, together with Definition~\ref{def:epsilon-subtraction}, implies that $P_1^\prime$ is a
projector and satisfies $0\leq P_1^\prime \leq P_1$. The fact that $S_1^\prime$ is a subspace of
$S_1$, follows directly.

2. It follows from equation~(\ref{eq:proof-lemma1-a}) and Definition~\ref{def:epsilon-subtraction}
that we can write $S_1^\prime$ as
\[S_1^\prime = \left(\bigoplus_{x:0<\lambda_x<\epsilon^2}\supp(Q_x)\right)\bigoplus
                                               \left(\supp(Q_x)|_{\lambda_x= 0}\bigcap S_1\right).
\]
That is to say, $S_1^\prime$ is the direct sum of the eigenspaces of $P_1P_2P_1$ with corresponding
eigenvalues in the interval $(0, \epsilon^2)$, together with a subspace of the kernel of $P_1P_2P_1$.
So, for any unit vectors $\ket{v_1}\in S_1^\prime$, $\ket{v_2}\in S_2$,
\[|\braket{v_1}{v_2}|=\sqrt{\braket{v_1}{v_2}\braket{v_2}{v_1}}
  \leq \sqrt{\bra{v_1}P_2\ket{v_1}}=\sqrt{\bra{v_1}P_1P_2P_1\ket{v_1}} \leq \epsilon.
\]

3. This inequality can be verified as follows.
\[\begin{split}
  \tr(P_1-P_1^\prime)= &\tr\sum_{x:\lambda_x\geq\epsilon^2}Q_x  \\
                  \leq &\tr\sum_{x:\lambda_x\geq\epsilon^2}\frac{\lambda_x}{\epsilon^2}Q_x
                  \leq  \tr\sum_x \frac{\lambda_x}{\epsilon^2}Q_x  \\
                     = &\frac{1}{\epsilon^2}\tr P_1P_2P_1 = \frac{1}{\epsilon^2}\tr P_1P_2.
\end{split}\]
\end{proof}

Lemma~\ref{lemma:bounded-overlap} and Lemma~\ref{lemma:Projector-bound} below are basic results
for subspaces of an inner product space.
\begin{lemma}
  \label{lemma:bounded-overlap}
  Let $V, W$ be subspaces of a Hilbert space $\cH$, and have direct-sum decompositions
  $V=\bigoplus_{i=1}^{T_1}V_i$ and $W=\bigoplus_{j=1}^{T_2} W_j$. Suppose we have
  \[\max_{\ket{v}\in V_i,\ket{w}\in W_j}|\braket{v}{w}| \leq \delta,
    \quad \text{ for all } 1 \leq i\leq T_1,\, 1 \leq j\leq T_2.
  \]
  Then the overlap between $V$ and $W$ is bounded as
  \[\max_{\ket{v}\in V,\ket{w}\in W}|\braket{v}{w}| \leq \sqrt{T_1T_2}\delta.
  \]
\end{lemma}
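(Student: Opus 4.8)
The plan is to reduce the statement to a direct estimate on the inner product of two arbitrary unit vectors, exploiting the orthogonality of the two direct-sum decompositions together with two applications of Cauchy--Schwarz. First I would fix arbitrary unit vectors $\ket{v}\in V$ and $\ket{w}\in W$ and bound $|\braket{v}{w}|$; since the assertion concerns the maximum over all such pairs, a uniform bound suffices. I expand these vectors along the given decompositions, writing $\ket{v}=\sum_{i=1}^{T_1}\ket{v_i}$ with $\ket{v_i}\in V_i$ and $\ket{w}=\sum_{j=1}^{T_2}\ket{w_j}$ with $\ket{w_j}\in W_j$. Because the decompositions are orthogonal direct sums, the Pythagorean identity gives $\sum_i\|v_i\|^2=\braket{v}{v}=1$ and $\sum_j\|w_j\|^2=1$.

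Next I would bound the cross terms. Expanding and using the triangle inequality,
\[|\braket{v}{w}| = \Bigl|\sum_{i,j}\braket{v_i}{w_j}\Bigr| \le \sum_{i,j}|\braket{v_i}{w_j}|.\]
For each pair $(i,j)$ with $\ket{v_i},\ket{w_j}\neq 0$, applying the hypothesis to the normalized vectors $\ket{v_i}/\|v_i\|\in V_i$ and $\ket{w_j}/\|w_j\|\in W_j$ yields $|\braket{v_i}{w_j}|\le \delta\,\|v_i\|\,\|w_j\|$, and this bound holds trivially when either component vanishes. Hence
\[|\braket{v}{w}| \le \delta \Bigl(\sum_{i=1}^{T_1}\|v_i\|\Bigr)\Bigl(\sum_{j=1}^{T_2}\|w_j\|\Bigr).\]

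Finally I would apply Cauchy--Schwarz to each factor separately: $\sum_i\|v_i\| \le \sqrt{T_1}\,\bigl(\sum_i\|v_i\|^2\bigr)^{1/2} = \sqrt{T_1}$ and likewise $\sum_j\|w_j\|\le\sqrt{T_2}$, using the two Pythagorean normalizations from the first step. Combining the three displays gives $|\braket{v}{w}|\le\sqrt{T_1T_2}\,\delta$, which is exactly the claimed bound. The argument is essentially routine once orthogonality is invoked; the only point deserving care is that the per-block hypothesis is phrased for unit vectors, so one must renormalize the components and carry along their norms, and it is precisely these norms that become the weights $\|v_i\|\,\|w_j\|$ on which the two Cauchy--Schwarz estimates act. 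I do not anticipate any substantive obstacle beyond ensuring the direct sums are genuinely orthogonal---as they are in the intended application, where the blocks are eigenspaces of $P_1P_2P_1$---since if the summands were only algebraically complementary, the Pythagorean step would fail and the factor $\sqrt{T_1T_2}$ could not be recovered.
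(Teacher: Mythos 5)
Your proof is correct and is essentially the paper's own argument: the paper writes $\ket{v}=\sum_i\alpha_i\ket{v_i}$ with unit $\ket{v_i}\in V_i$ and $\sum_i|\alpha_i|^2=1$, so your norms $\|v_i\|$ are exactly its coefficients $|\alpha_i|$, and both proofs then combine the per-block hypothesis with the triangle inequality and Cauchy--Schwarz in the form $\sum_i|\alpha_i|\le\sqrt{T_1}$. Your remark that the direct sums must be orthogonal for the Pythagorean step is a fair observation about an implicit convention in the lemma's statement, and it holds in the paper's application.
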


\begin{proof}
Let $\ket{v}\in V$ and $\ket{w}\in W$ be any two unit vectors. Write $\ket{v}=\sum_i\alpha_i\ket{v_i}$
and $\ket{w}=\sum_j\beta_j\ket{w_j}$, with $\ket{v_i}\in V_i,\ket{w_j}\in W_j$ and $\sum_i|\alpha_i|^2
=\sum_j|\beta_j|^2=1$. Making use of the Cauchy-Schwarz inequality, we have
\[\begin{split}
  |\braket{v}{w}| =    &\left|\sum_{i,j}\overline{\alpha_i}\beta_j\braket{v_i}{w_j}\right| \\
                  \leq &\sum_{i,j}|\alpha_i|\cdot|\beta_j|\delta  \\
                  =    &\Big(\sum_{i=1}|\alpha_i|\Big)\Big(\sum_{j=1}|\beta_j|\Big)\delta \\
                  \leq &\sqrt{T_1T_2}\delta,
\end{split}\]
and this finishes the proof.
\end{proof}

\begin{lemma}
  \label{lemma:Projector-bound}
  Let $S_1, S_2,\ldots, S_r$ be subspaces of a Hilbert space, such that the overlaps between them are
  bounded as
  \[\max_{\ket{v_i}\in S_i,\ket{v_j}\in S_j}|\braket{v_i}{v_j}| \leq \delta,
    \quad  1 \leq i\neq j \leq r.
  \]
  For all $1 \leq i \leq r$, denote the projector onto $S_i$ as $P_i$, and denote the projector onto
  $S=S_1+S_2+\cdots+S_r$ as $P$. Suppose $\delta<\frac{1}{2(r-1)}$. Then
  \begin{equation}
    \label{eq:projector-bound}
    P \leq \frac{1-(r-1)\delta}{1-2(r-1)\delta}\sum_{i=1}^rP_i.
  \end{equation}
\end{lemma}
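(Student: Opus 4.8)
The plan is to prove the operator inequality~(\ref{eq:projector-bound}) by analyzing the spectrum of the operator $\sum_{i=1}^r P_i$ relative to the projector $P$ onto the span $S=S_1+\cdots+S_r$. Since each $P_i\leq P$, the operator $M:=\sum_{i=1}^r P_i$ is supported on $S$, so it suffices to work inside $S$ and show that the smallest eigenvalue of $M$ (restricted to $S$) is bounded below by $\frac{1-2(r-1)\delta}{1-(r-1)\delta}$; equivalently $M\geq \frac{1-2(r-1)\delta}{1-(r-1)\delta}\,P$, which upon inversion of the scalar gives~(\ref{eq:projector-bound}). First I would fix an arbitrary unit vector $\ket{u}\in S$ and estimate $\bra{u}M\ket{u}=\sum_{i=1}^r\bra{u}P_i\ket{u}=\sum_{i=1}^r\|P_i\ket{u}\|^2$ from below.

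The key device is the small-overlap hypothesis, which controls the off-diagonal inner products $\bra{v_i}v_j\rangle$ for $\ket{v_i}\in S_i$, $\ket{v_j}\in S_j$. The natural approach is to write $\ket{u}=\sum_{i=1}^r c_i\ket{w_i}$ as some combination of components $\ket{w_i}\in S_i$ (for instance, using that $S$ is spanned by the $S_i$, one can express $\ket{u}$ as a sum of vectors from the subspaces, not necessarily orthogonal), and then expand $1=\braket{u}{u}=\sum_{i,j}\overline{c_i}c_j\braket{w_i}{w_j}$. Separating the diagonal terms from the off-diagonal ones and bounding $|\braket{w_i}{w_j}|\leq\delta$ for $i\neq j$ via the overlap hypothesis produces a quadratic-form inequality in the coefficients $c_i$. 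I would then relate $\bra{u}M\ket{u}$ to these same coefficients: since $P_i\ket{u}$ is the best approximation to $\ket{u}$ in $S_i$, each term $\|P_i\ket{u}\|^2$ is at least $|\braket{w_i}{u}|^2$ for the chosen representative, and summing gives a lower bound. The factor $1-(r-1)\delta$ in the numerator and $1-2(r-1)\delta$ in the denominator strongly suggest a Gershgorin-type estimate: the Gram matrix $G$ with entries $G_{ij}=\braket{w_i}{w_j}$ has unit diagonal and off-diagonal entries bounded by $\delta$, so its eigenvalues lie in $[1-(r-1)\delta,\,1+(r-1)\delta]$, and the condition $\delta<\frac{1}{2(r-1)}$ guarantees $G$ is positive definite and well-conditioned. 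The ratio of extreme eigenvalue bounds is exactly what converts a lower bound on $\sum_i\|P_i\ket{u}\|^2$ into the stated constant.

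The main obstacle I anticipate is bookkeeping the relationship between the abstract sum-of-subspaces decomposition of $\ket{u}$ and the concrete quantities $\bra{u}P_i\ket{u}$, since the subspaces $S_i$ are neither orthogonal nor in general position, and the decomposition of $\ket{u}$ into components from the $S_i$ need not be unique. The clean way to sidestep this is to set $\ket{x_i}:=P_i\ket{u}$ directly, so that $\sum_i\ket{x_i}$ lives in $S$ and each $\braket{x_i}{u}=\|x_i\|^2$ is real and nonnegative. Writing $a_i:=\|x_i\|$ and using $\braket{x_i}{x_j}\leq a_i a_j\delta$ for $i\neq j$ (from the overlap bound applied to the unit vectors $\ket{x_i}/a_i$), one expands $\|\sum_i x_i\|^2=\sum_i a_i^2+\sum_{i\neq j}\braket{x_i}{x_j}$ and, separately, relates $\braket{u}{\sum_i x_i}=\sum_i a_i^2=\bra{u}M\ket{u}$ to $\|u\|\cdot\|\sum_i x_i\|$ by Cauchy--Schwarz. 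Combining the lower bound $\|\sum_i x_i\|^2\geq\bigl(1-(r-1)\delta\bigr)\sum_i a_i^2$ (after discarding cross terms via $\sum_{i\neq j}a_i a_j\delta\leq(r-1)\delta\sum_i a_i^2$) with the Cauchy--Schwarz step $\bigl(\sum_i a_i^2\bigr)^2=\braket{u}{\sum x_i}^2\leq\|\sum x_i\|^2$ yields $\sum_i a_i^2\geq\frac{1-2(r-1)\delta}{1-(r-1)\delta}$ after solving the resulting inequality, which is precisely $\bra{u}M\ket{u}\geq\frac{1-2(r-1)\delta}{1-(r-1)\delta}$, giving the claim once I verify the denominator is positive under $\delta<\frac{1}{2(r-1)}$. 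The delicate part will be getting the cross-term cancellations to land on exactly these constants rather than a weaker bound, which is why I would track the arithmetic of the quadratic inequality carefully rather than discarding terms too early.
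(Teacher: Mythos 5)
Your reduction to showing $\bra{u}\sum_iP_i\ket{u}\geq\frac{1-2(r-1)\delta}{1-(r-1)\delta}$ for every unit vector $\ket{u}\in S$ is sound and is exactly how the paper proceeds. The gap is in your final combination step. With $\ket{x_i}:=P_i\ket{u}$ and $a_i:=\|x_i\|$, your two ingredients are (a) $\|\sum_ix_i\|^2\geq(1-(r-1)\delta)\sum_ia_i^2$ and (b) $\left(\sum_ia_i^2\right)^2=\braket{u}{\textstyle\sum_ix_i}^2\leq\|\sum_ix_i\|^2$. Both of these are \emph{lower} bounds on the same quantity $\|\sum_ix_i\|^2$; they point in the same direction and cannot be chained to produce a lower bound on $\sum_ia_i^2$. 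If you instead use the upper bound $\|\sum_ix_i\|^2\leq(1+(r-1)\delta)\sum_ia_i^2$ together with (b), you get $\sum_ia_i^2\leq 1+(r-1)\delta$, an upper bound --- the wrong direction. The underlying problem is that working only with the projections $P_i\ket{u}$ gives you no handle forcing $\sum_iP_i\ket{u}$ to have a large component along $\ket{u}$: Cauchy--Schwarz only bounds $\braket{u}{\sum_ix_i}$ from above, and nothing in your setup bounds it from below. The membership $\ket{u}\in S_1+\cdots+S_r$ is never used, yet it is the only reason the conclusion is true.

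The repair is precisely the decomposition you decided to ``sidestep.'' Write $\ket{u}=\sum_i\ket{w_i}$ with $\ket{w_i}\in S_i$ (possible since $\ket{u}\in S$). Then $1=\braket{u}{u}=\sum_i\braket{w_i}{u}=\sum_i\braket{w_i}{P_iu}\leq\sum_i\|w_i\|\,a_i\leq\big(\sum_i\|w_i\|^2\big)^{1/2}\big(\sum_ia_i^2\big)^{1/2}$, and the Gram estimate you already wrote down, $1=\|\sum_iw_i\|^2\geq(1-(r-1)\delta)\sum_i\|w_i\|^2$, bounds $\sum_i\|w_i\|^2\leq\frac{1}{1-(r-1)\delta}$, whence $\sum_ia_i^2\geq 1-(r-1)\delta\geq\frac{1-2(r-1)\delta}{1-(r-1)\delta}$. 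This is in the same spirit as the paper's proof, which also expands $\ket{v}=\sum_i\alpha_i\ket{v_i}$ with $\ket{v_i}\in S_i$, but the paper avoids Cauchy--Schwarz on the cross terms by an exact algebraic identity expressing $\bra{v}\sum_kP_k\ket{v}$ as $2-\sum_i|\alpha_i|^2$ plus a manifestly nonnegative remainder, then applies the same Gram estimate; that route lands exactly on the stated constant, while the corrected Cauchy--Schwarz route actually yields the slightly stronger constant $1-(r-1)\delta$.
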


\begin{proof}
For an arbitrary unit vector $\ket{v}\in S$, write $\ket{v}=\sum_{i=1}^r\alpha_i\ket{v_i}$, with
$\ket{v_i}\in S_i$. Then
\[\begin{split}
  \bra{v}\sum_{k=1}^rP_k\ket{v}=&\sum_{(i,k):i=k}\sum_j \overline{\alpha_i}\alpha_j\bra{v_i}P_k\ket{v_j}
        +\sum_i\sum_{(j,k):j=k} \overline{\alpha_i}\alpha_j\bra{v_i}P_k\ket{v_j}  \\
   &\ \ -\sum_{\substack{(i,j,k):\\i=j=k}} \overline{\alpha_i}\alpha_j\bra{v_i}P_k\ket{v_j}
        +\sum_k\sum_{i:i\neq k}\sum_{j:j\neq k} \overline{\alpha_i}\alpha_j\bra{v_i}P_k\ket{v_j} \\
  =&2\braket{v}{v}-\sum_{i=1}^r|\alpha_i|^2+\sum_{k=1}^r\left(\bra{v}-\overline{\alpha_k}
                                             \bra{v_k}\right)P_k\left(\ket{v}-\alpha_k\ket{v_k}\right).
\end{split}\]
The last term is nonnegative. Besides, we have
\[\begin{split}
  1   =&\braket{v}{v}=\sum_{i=1}^r|\alpha_i|^2 +
                                \sum_{(i,j):i\neq j}\overline{\alpha_i}\alpha_j\braket{v_i}{v_j} \\
  \geq &\sum_{i=1}^r|\alpha_i|^2 -\sum_{(i,j):i\neq j}|\alpha_i|\cdot|\alpha_j|\cdot|\braket{v_i}{v_j}| \\
  \geq &\sum_{i=1}^r|\alpha_i|^2-\sum_{(i,j):i\neq j}\frac{\left(|\alpha_i|^2+|\alpha_j|^2\right)\delta}{2} \\
    =  &\big(1-(r-1)\delta\big)\sum_{i=1}^r|\alpha_i|^2.
\end{split}\]
Combining the above arguments, we get
\[
  \bra{v}\sum_{k=1}^rP_k\ket{v} \geq 2-\frac{1}{1-(r-1)\delta} = \frac{1-2(r-1)\delta}{1-(r-1)\delta},
\]
which implies equation~(\ref{eq:projector-bound}).
\end{proof}

\subsection{Construction of the measurements}
  \label{subsection:measurements-construction}
In the following, we will describe the procedure of constructing a family of projective measurements
$\{\Pi_1(\epsilon), \ldots, \Pi_{r-1}(\epsilon), \Pi_{r}(\epsilon)+\Pi_{r+1}(\epsilon)\}$, which will
be used to show that the right-hand side of equation~(\ref{eq:one-shot-achi}) is an achievable error
probability.

Our construction is similar to the ones explored in~\cite{NussbaumSzkola11a,NussbaumSzkola11} and
earlier in~\cite{Holevo78}, especially, in applying the Gram-Schmidt orthonormalization to states
that are ordered according to the corresponding eigenvalues, for formulating the projective
measurements. However, our method is also significantly different from those of~\cite{Holevo78,NussbaumSzkola11a,NussbaumSzkola11}. At first, instead of dealing with every
eigenvector of the hypothetic states individually, we treat each of the eigenspaces as a whole.
This, for i.i.d. states of the form $\omega^{\ox n}$, is reminiscent of the method of
types~\cite{Csiszar98,CsiszarKorner81}, from which we indeed benefit. Secondly, we carefully
remove from each of these eigenspaces a subspace, in a way such that the perturbation to the
hypothetic states is limited but the overlaps between the supporting spaces of them become
sufficiently small. As a result, we can effectively employ the Gram-Schmidt process to formulate
an approximately optimal measurement.

Recall that for $1\leq i\leq r$, $A_i=\sum_{k=1}^{T_i}\lambda_{ik}Q_{ik}$ is the spectral
decomposition. Let $S_{ik}:=\supp(Q_{ik})$ be the eigenspaces of $A_i$. From now on, we always identify
the subscript ``ik'' with ``(i,k)''. So $\lambda_{ik}$, $Q_{ik}$ and $S_{ik}$ are also denoted as
$\lambda_{(i,k)}$, $Q_{(i,k)}$ and $S_{(i,k)}$, respectively. Define the index set $\cO:=\bigcup_{i=1}^r
\{(i,k):k\in\mathbb{N}, 1\leq k\leq T_i\}$. Now we arrange all the eigenvalues
$\{\lambda_{ik}\}_{(i,k)\in\cO}$ in a decreasing order, and let $g:\{0,1,2,\ldots,|\cO|\}\mapsto\{(0,0)
\}\cup\cO$ be the bijection indicating the position of each $\lambda_{ik}$ in such an ordering:
\begin{equation}
  \label{eq:ordering}
\lambda_{g(1)} \geq \lambda_{g(2)} \geq \cdots \geq \lambda_{g(|\cO|)},
\end{equation}
and $g(0)=(0,0)$ is introduced for later convenience. Our construction consists of the following three
steps.

\textbf{Step 1}: reducing the overlaps between the eigenspaces. For this purpose, we employ the operation
$\epsilon$-subtraction to remove a subspace from each of these eigenspaces. Let $\ominus_\epsilon$ be
endowed with a left associativity, that is, $A\ominus_\epsilon B\ominus_\epsilon C:=(A\ominus_\epsilon B)
\ominus_\epsilon C$. Set $Q_{g(0)}=0$ and $S_{g(0)}=\{0\}$. We define
\begin{equation}
    \label{eq:dig-projectors}
    \tilde{Q}_{g(m)} := \begin{cases}
    & Q_{g(0)}, \quad  \text{if}\ m=0 \\
    & Q_{g(m)}\ominus_\epsilon Q_{g(0)}\ominus_\epsilon Q_{g(1)}\ominus_\epsilon \cdots \ominus_\epsilon Q_{g(m-1)}, \  \text{if}\ 1\leq m\leq |\cO|
    \end{cases}
\end{equation}
and
\begin{equation}
    \label{eq:dig-eigenspaces}
    \tilde{S}_{g(m)} := \begin{cases}
    & S_{g(0)}, \quad \text{if}\ m=0 \\
    & S_{g(m)}\ominus_\epsilon S_{g(0)}\ominus_\epsilon S_{g(1)}\ominus_\epsilon \cdots \ominus_\epsilon S_{g(m-1)}, \ \text{if}\ 1\leq m\leq |\cO|.
    \end{cases}
\end{equation}
Note that, according to Definition~\ref{def:epsilon-subtraction}, $\tilde{S}_{g(m)}=\supp\left(
\tilde{Q}_{g(m)}\right)$. Now we denote
\begin{equation}
  \label{eq:post-digging-states}
\tilde{A}_i:=\sum_{k=1}^{T_i}\lambda_{ik}\tilde{Q}_{ik}, \quad 1\leq i\leq r.
\end{equation}
We will show later that, for the purpose of the current paper, $\tilde{A}_i$ is a good approximation
of $A_i$.

\textbf{Step 2}: orthogonalizing the eigenspaces. To formulate the projective measurement, we need
to assign each of the states $\{A_i\}_i$ an orthogonal subspace, for the projectors to be supported
on. To do so, we treat $\tilde{A}_i\text{'s}$ as representives of $A_i\text{'s}$, and orthogonalize
the subspaces $\{\tilde{S}_{g(m)}\}_m$, using a Gram-Schmidt process. Define $\hat{S}_{g(0)}:=\{0\}$,
and for all $1\leq m\leq |\cO|$,
\begin{equation}
  \label{eq:orthogonal-subspaces-2}
\hat{S}_{g(m)}:=\left(\tilde{S}_{g(0)}+\cdots +\tilde{S}_{g(m)}\right)\ominus_1
                                 \left(\tilde{S}_{g(0)}+\cdots +\tilde{S}_{g(m-1)}\right),
\end{equation}
where $\ominus_1$ is the operation ``$\epsilon$-subtraction'' with $\epsilon=1$. Recalling Definition~\ref{def:epsilon-subtraction}, we easily see that the right-hand side of
equation~(\ref{eq:orthogonal-subspaces-2}) is just the orthogonal complement of $\tilde{S}_{g(0)}
+\cdots +\tilde{S}_{g(m-1)}$, in the space $\tilde{S}_{g(0)}+\cdots +\tilde{S}_{g(m)}$, noticing
that obviously the former is a subspace of the latter. So the subspaces $\hat{S}_{g(1)},\ldots,\hat{S}
_{g(m)}$ are mutually orthogonal. Thus the definition in equation~(\ref{eq:orthogonal-subspaces-2})
is equivalent to
\begin{equation}
  \label{eq:orthogonal-subspaces-3}
  \tilde{S}_{g(1)}+\cdots +\tilde{S}_{g(m)}=\bigoplus_{t=1}^m \hat{S}_{g(t)}, \quad \text{for all }
                                                                                \ 1\leq m\leq|\cO|.
\end{equation}
Note that it is possible that $\hat{S}_{g(m)}=\{0\}$ for certain values of $m$, and in these cases
we have $\Proj\left(\hat{S}_{g(m)}\right)=0$.

\textbf{Step 3}: defining the family of projective measurements. We set
\begin{equation}
  \label{eq:measurement-projectors-1}
\Pi_i(\epsilon):=\sum_{k=1}^{T_i} \Proj\left(\hat{S}_{ik}\right), \quad 1\leq i\leq r,
\end{equation}
and let
\begin{equation}
  \label{eq:measurement-projectors-2}
\Pi_{r+1}(\epsilon):=\1-\sum_{i=1}^{r} \Pi_i(\epsilon).
\end{equation}
Here the parameter $\epsilon$ is introduced in step 1. By definition, $\Pi_1(\epsilon),\ldots,
\Pi_{r+1}(\epsilon)$ are orthogonal projectors and $\sum_{i=1}^{r+1}\Pi_i(\epsilon)=\1$. So, they form
a projective measurement. Our strategy for testing $A_1,\ldots,A_r$ is that, if we get the measurement
outcome associated with $\Pi_i(\epsilon)$, we conclude that the state is $A_i$. For the outcome
associated with the extra projector $\Pi_{r+1}(\epsilon)$, we can make any decision, or just report an
error; here we simply assign it to $A_r$. Thus, the family of measurements that we construct for use is
\begin{equation}
  \label{eq:measurement}
\Pi=\left\{\Pi_1(\epsilon),\ldots,\Pi_{r-1}(\epsilon), \Pi_{r}(\epsilon)+\Pi_{r+1}(\epsilon)\right\}.
\end{equation}

\subsection{Proof of the one-shot achievability bound: Theorem~\ref{thm:one-shot-achi}}
  \label{subsection:proof of Theorem 2}
We show that, for properly chosen $\epsilon\in[0,1]$, the measurement constructed in
Section~\ref{subsection:measurements-construction} will achieve an error probability that equals
the right-hand side of equation~(\ref{eq:one-shot-achi}).

\begin{proof}[Proof of Theorem~\ref{thm:one-shot-achi}]
For the ensemble of nonnegative matrices $\cA=\{A_1,\dots, A_r\}$, and the measurement $\Pi$
specified in equation~(\ref{eq:measurement}), we have
\[\begin{split}
  P_e(\cA;\Pi)=&\sum_{i=1}^{r-1} \tr A_i \left(\1-\Pi_i(\epsilon)\right)+\tr
                                   A_r\left(\1-\Pi_{r}(\epsilon)-\Pi_{r+1}(\epsilon)\right) \\
          \leq &\sum_{i=1}^{r} \tr A_i \left(\1-\Pi_i(\epsilon)\right).
\end{split}\]
We now make use of the matrices $\{\tilde{A}_i\}$, which are defined in step 1 of the measurement construction
in section~\ref{subsection:measurements-construction}; cf. equantion~(\ref{eq:post-digging-states}).
Substituting $(A_i-\tilde{A}_i)+\tilde{A}_i$ for $A_i$, and noticing that it is an obvious result of
equations~(\ref{eq:dig-projectors}) and (\ref{eq:post-digging-states}) that $A_i-\tilde{A}_i\geq 0$,
we further bound the error probability as
\begin{equation}
  \label{eq:error-bound}
  P_e(\cA;\Pi)\leq \sum_{i=1}^{r} \tr (A_i-\tilde{A}_i)
                           + \sum_{i=1}^{r} \tr \tilde{A}_i \left(\1-\Pi_i(\epsilon)\right).
\end{equation}
In the following, we will evaluate the two terms of the right-hand side of equation~(\ref{eq:error-bound}),
separately.

Invoking the spectral decomposition of $A_i$, and using equation~(\ref{eq:post-digging-states}),
we can write
\begin{equation}\begin{split}
  \label{eq:first-term-bound-1}
    \sum_{i=1}^{r}\tr(A_i-\tilde{A}_i)
  =&\sum_{i=1}^{r}\sum_{k=1}^{T_i}\lambda_{ik}\tr(Q_{ik}-\tilde{Q}_{ik})\\
  =&\sum_{m=1}^{|\cO|}\lambda_{g(m)}\tr\left(Q_{g(m)}-\tilde{Q}_{g(m)}\right),
\end{split}\end{equation}
where we have used the map $g$, introduced in the previous section, to indicate the subscripts.
For each integer $2\leq m\leq |\cO|$, applying the third result of Lemma~\ref{lemma:epsilon-subtraction-prop}
to the $\epsilon$-subtraction formulas
\[
  \left(Q_{g(m)}\ominus_\epsilon Q_{g(0)}\ominus_\epsilon Q_{g(1)}\ominus_\epsilon \cdots \ominus_\epsilon Q_{g(t-1)}\right)\ominus_\epsilon Q_{g(t)}\, ,   \quad 1\leq t\leq m-1,
\]
we obtain
\begin{equation}\begin{split}
  \label{eq:first-term-bound-2}
  &\tr \left(Q_{g(m)}\ominus_\epsilon Q_{g(0)}\ominus_\epsilon \cdots \ominus_\epsilon Q_{g(t-1)}\right)
   - \tr \left(Q_{g(m)}\ominus_\epsilon Q_{g(0)}\ominus_\epsilon \cdots \ominus_\epsilon Q_{g(t)}\right) \\
  &\leq \frac{1}{\epsilon^2}\tr\left(Q_{g(m)}\ominus_\epsilon Q_{g(0)}\ominus_\epsilon \cdots
                                                              \ominus_\epsilon Q_{g(t-1)}\right)Q_{g(t)} \\
  &\leq \frac{1}{\epsilon^2}\tr Q_{g(m)}Q_{g(t)},
\end{split}\end{equation}
where for the last inequality we have used repeatedly the first result of Lemma~\ref{lemma:epsilon-subtraction-prop}.
Summing over $t\in\{1,2,\ldots,m-1\}$ at both the first and the last line of
Equation~(\ref{eq:first-term-bound-2}), yields
\begin{equation}\begin{split}
  \label{eq:first-term-bound-3}
       &\tr Q_{g(m)} - \tr \left(Q_{g(m)}\ominus_\epsilon Q_{g(0)}\ominus_\epsilon Q_{g(1)}
                                               \ominus_\epsilon\cdots \ominus_\epsilon Q_{g(m-1)}\right) \\
  \leq & \frac{1}{\epsilon^2}\sum_{t=1}^{m-1}\tr Q_{g(m)}Q_{g(t)},
\end{split}\end{equation}
for all $2\leq m\leq |\cO|$. Combining equations (\ref{eq:dig-projectors}) and (\ref{eq:first-term-bound-3}),
and directly verifying the case $m=1$, we get
\begin{equation}
  \label{eq:first-term-bound-4}
       \tr \left(Q_{g(m)}-\tilde{Q}_{g(m)}\right)
  \leq \frac{1}{\epsilon^2}\sum_{t=0}^{m-1}\tr Q_{g(m)}Q_{g(t)}, \quad  1\leq m\leq |\cO|.
\end{equation}
Eventually, inserting equation~(\ref{eq:first-term-bound-4}) into equation~(\ref{eq:first-term-bound-1}),
and changing the subscripts, we arrive at
\begin{equation}\begin{split}
  \label{eq:first-term-bound}
  \sum_{i=1}^{r}\tr(A_i-\tilde{A}_i)
  \leq & \frac{1}{\epsilon^2}\sum_{m=1}^{|\cO|}\sum_{t=0}^{m-1}\lambda_{g(m)}\tr Q_{g(m)}Q_{g(t)} \\
  =&\frac{1}{\epsilon^2}\sum_{(i,j):i<j}\sum_{k,\ell}\min\{\lambda_{ik},\lambda_{j\ell}\} \tr Q_{ik}Q_{j\ell}.
\end{split}\end{equation}

Now we evaluate the second term of the right-hand side of equation~(\ref{eq:error-bound}). Using
equations~(\ref{eq:post-digging-states}) and (\ref{eq:measurement-projectors-1}), and employing the
map $g$ to indicate the subscripts, we can write
\begin{equation}\begin{split}
  \label{eq:second-term-bound-1}
     \sum_{i=1}^{r} \tr \tilde{A}_i \left(\1-\Pi_i(\epsilon)\right)
  =    &\sum_{i=1}^{r} \sum_{k=1}^{T_i}\lambda_{ik}\tr\tilde{Q}_{ik}\left(\1-
                                 \sum_{\ell=1}^{T_i}\Proj\left(\hat{S}_{i\ell}\right)\right) \\
  \leq &\sum_{i=1}^{r} \sum_{k=1}^{T_i}\lambda_{ik}\tr\tilde{Q}_{ik}\left(\1-
                                                           \Proj\left(\hat{S}_{ik}\right)\right)  \\
  =    &\sum_{m=1}^{|\cO|} \lambda_{g(m)}\tr\tilde{Q}_{g(m)}\left(\1-
                                                           \Proj\left(\hat{S}_{g(m)}\right)\right).
\end{split}\end{equation}
Equation~(\ref{eq:orthogonal-subspaces-3}) implies that
\[
  \supp\left(\tilde{Q}_{g(m)}\right)=\tilde{S}_{g(m)}\subseteq \bigoplus_{t=1}^m\hat{S}_{g(t)}.
\]
As a result, the identity matrix in the third line of equation~(\ref{eq:second-term-bound-1})
can be replaced by $\sum_{t=1}^m \Proj\left(\hat{S}_{g(t)}\right)$. This gives
\begin{equation}\begin{split}
  \label{eq:second-term-bound-2}
     \sum_{i=1}^{r} \tr \tilde{A}_i \left(\1-\Pi_i(\epsilon)\right)
  \leq &\sum_{m=1}^{|\cO|} \lambda_{g(m)}\tr\tilde{Q}_{g(m)}\sum_{t=0}^{m-1}\Proj\left(\hat{S}_{g(t)}\right) \\
  =    &\sum_{m=1}^{|\cO|} \lambda_{g(m)}\tr\tilde{Q}_{g(m)}\,\Proj\left(\sum_{t=0}^{m-1}\tilde{S}_{g(t)}\right),
\end{split}\end{equation}
where we have $\hat{S}_{g(0)}=\tilde{S}_{g(0)}=\{0\}$, and for the equality we used
equation~(\ref{eq:orthogonal-subspaces-3}). The next step is to upper bound $\Proj\left(\sum_{t=0}
^{m-1}\tilde{S}_{g(t)}\right)$, with a quantity in terms of $\sum_{t=0}^{m-1}\tilde{Q}_{g(t)}$.
This can be done by directly applying Lemma~\ref{lemma:Projector-bound}. However, we notice that,
for each $1\leq i\leq r$ the subspaces in the set $\{\tilde{S}_{ik}\}_k$ are orthogonal and we can
make use of this fact to derive a tighter bound. For a pair of numbers $(x,y)$, let $[(x,y)]_1$ denote
the first component: $[(x,y)]_1=x$. We write
\begin{equation}
  \label{eq:second-term-bound-3}
  \sum_{t=0}^{m-1}\tilde{S}_{g(t)}=\sum_{i=1}^r \tilde{S}_i^m,\quad\text{with}\ \tilde{S}_i^m :=
                          \bigoplus_{\substack{t:0\leq t\leq m-1,\\ [g(t)]_1=i}}\tilde{S}_{g(t)}.
\end{equation}
We will use Lemma~\ref{lemma:bounded-overlap} to bound the overlaps between each pair of the
subspaces $\{\tilde{S}_i^m\}_{i=1}^r$, and then we apply Lemma~\ref{lemma:Projector-bound}.
Although we will only get a slightly better bound, compared to applying Lemma~\ref{lemma:Projector-bound}
directly, it is possible to make this improvement bigger by strengthening the result of
Lemma~\ref{lemma:Projector-bound}. Now, due to Lemma~\ref{lemma:epsilon-subtraction-prop}
and the definition of $\tilde{S}_{g(t)}$ [cf. equation~(\ref{eq:dig-eigenspaces})], we can bound
\[
  \max_{\ket{v}\in\tilde{S}_{g(t)},\ket{v'}\in S_{g(t')}}\left|\braket{v}{v'}\right|\leq\epsilon,
  \quad 0\leq t'< t\leq|\cO|.
\]
Since $\tilde{S}_{g(t')}$ is a subspace of $S_{g(t')}$, it follows that
\[
  \max_{\ket{v}\in\tilde{S}_{g(t)},\ket{v'}\in\tilde{S}_{g(t')}}\left|\braket{v}{v'}\right|\leq\epsilon,
  \quad 0\leq t\neq t'\leq|\cO|.
\]
Recalling the spectral decomposition of each $A_i$, we notice that the direct sum in
equation~(\ref{eq:second-term-bound-3}) has at most $T_i$ terms. So, an application of
Lemma~\ref{lemma:bounded-overlap} gives us that, for every $1\leq m\leq|\cO|$,
\begin{equation}
  \label{eq:second-term-bound-5}
  \max_{\ket{v}\in\tilde{S}_i^m,\ket{w}\in\tilde{S}_j^m}\left|\braket{v}{w}\right|\leq T\epsilon,
  \quad 1\leq i\neq j\leq r,
\end{equation}
where $T=\max\{T_1,\ldots,T_r\}$. Equation~(\ref{eq:second-term-bound-5}) allows us to apply
Lemma~\ref{lemma:Projector-bound} and obtain
\begin{equation}
  \label{eq:second-term-bound-7}
  \Proj\left(\sum_{t=0}^{m-1}\tilde{S}_{g(t)}\right)
       \leq \frac{1\!-\!(r\!-\!1)T\epsilon}{1\!-\!2(r\!-\!1)T\epsilon}
                   \sum_{t=0}^{m-1}\tilde{Q}_{g(t)}\,, \quad 1\leq m\leq|\cO|,
\end{equation}
for which we have also used equation~(\ref{eq:second-term-bound-3}) and the fact that $\Proj\left(
\tilde{S}_{g(t)}\right)=\tilde{Q}_{g(t)}$. Note that here the condition on $\delta$ in
Lemma~\ref{lemma:Projector-bound} is satisfied for our later choice of $\epsilon$. Now, inserting equation~(\ref{eq:second-term-bound-7}) into equation~(\ref{eq:second-term-bound-2}), and making
use of the relation $\tilde{Q}_{g(m)}\leq Q_{g(m)}$ for all $0\leq m\leq|\cO|$, we arrive at
\begin{equation}
  \label{eq:second-term-bound-8}
  \sum_{i=1}^{r} \tr \tilde{A}_i \left(\1-\Pi_i(\epsilon)\right)
         \leq \frac{1\!-\!(r\!-\!1)T\epsilon}{1\!-\!2(r\!-\!1)T\epsilon}
                \sum_{m=1}^{|\cO|}\lambda_{g(m)}\sum_{t=0}^{m-1}\tr Q_{g(m)} Q_{g(t)},
\end{equation}
which translates to
\begin{equation}
  \label{eq:second-term-bound}
  \sum_{i=1}^{r} \tr \tilde{A}_i \left(\1-\Pi_i(\epsilon)\right) \leq \frac{1\!-\!(r\!-\!1)T\epsilon}{1\!-\!2(r\!-\!1)T\epsilon}
       \sum_{(i,j):i<j}\sum_{k,\ell} \min\{\lambda_{ik},\lambda_{j\ell}\} \tr Q_{ik}Q_{j\ell}.
\end{equation}

Eventually, inserting equations~(\ref{eq:first-term-bound}) and~(\ref{eq:second-term-bound}) into equation~(\ref{eq:error-bound})and setting $\epsilon=\frac{2}{5(r-1)T}$ lets us obtain
equation~(\ref{eq:one-shot-achi}), with
\[
  f(r,T)=\frac{25(r-1)^2T^2}{4}+3 < 10(r\!-\!1)^2T^2,
\]
and we are done.
\end{proof}

\subsection{Proof of the error exponent}
  \label{subsection:proof of Theorem 1}
We are now ready for the proof of Theorem~\ref{thm:quantum-multiple-Chernoff}.

\begin{proof}[Proof of Theorem~\ref{thm:quantum-multiple-Chernoff}]
For the achievability part, we use Theorem~\ref{thm:one-shot-achi}. Let $d=|\cH|$ be the dimension
of the associated Hilbert space of the states $\rho_1,\ldots, \rho_r$. The type counting lemma
(see, e.g.~\cite{CT91}, Theorem 12.1.1) provides that the number of eigenspaces of the states
$\rho_1^{\ox n},\ldots, \rho_r^{\ox n}$ satisfies
\[
  \Omega\left(\rho_i^{\ox n}\right) \leq (n+1)^d, \quad \forall\ 1\leq i\leq r.
\]
For all $1\leq i\leq r$, let $\rho_i^{\ox n}=\sum_k\lambda_{ik}^{(n)}Q_{ik}^{(n)}$ be written in the
spectral decomposition. Theorem~\ref{thm:one-shot-achi} gives
\begin{equation}\begin{split}
  \label{eq:theorem-1-proof-1}
       &P_e^*\left(\{p_1\rho_1^{\ox n},\ldots,p_r\rho_r^{\ox n}\}\right)  \\
  \leq &10(r\!-\!1)^2(n+1)^{2d} \sum_{(i,j):i<j}\sum_{k,\ell} \min\{p_i\lambda_{ik}^{(n)},
                                   p_j\lambda_{j\ell}^{(n)}\} \tr Q_{ik}^{(n)}Q_{j\ell}^{(n)}.
\end{split}\end{equation}
Furthermore, for any $1\leq i<j\leq r$, we have
\begin{equation}\begin{split}
  \label{eq:theorem-1-proof-2}
       &\sum_{k,\ell} \min\{p_i\lambda_{ik}^{(n)},p_j\lambda_{j\ell}^{(n)}\} \tr Q_{ik}^{(n)}Q_{j\ell}^{(n)} \\
  \leq &\max\{p_i,p_j\}\min_{0\leq s\leq 1} \sum_{k,\ell} \big(\lambda_{ik}^{(n)}\big)^s
                               \big(\lambda_{j\ell}^{(n)}\big)^{1-s} \tr Q_{ik}^{(n)}Q_{j\ell}^{(n)} \\
   =   &\max\{p_i,p_j\}\min_{0\leq s\leq 1} \left(\tr\rho_i^s\rho_j^{1-s}\right)^n.
\end{split}\end{equation}
Inserting equation~(\ref{eq:theorem-1-proof-2}) into equation~(\ref{eq:theorem-1-proof-1}), together
with some basic calculus, results in
\begin{equation}\begin{split}
  \label{eq:theorem-1-proof-3}
       &P_e^*\left(\{p_1\rho_1^{\ox n},\ldots,p_r\rho_r^{\ox n}\}\right)  \\
  \leq &10(r\!-\!1)^2C_r^2(n+1)^{2d}\max\{p_1,\ldots,p_r\} \max_{(i,j):i\neq j}\min_{0\leq s\leq 1}
                                                                 \left(\tr\rho_i^s\rho_j^{1-s}\right)^n,
\end{split}\end{equation}
Where $C_r^2=\frac{r(r-1)}{2}$ is a binomial coefficient. From equation~(\ref{eq:theorem-1-proof-3})
we easily derive
\begin{equation}\begin{split}
  \label{eq:theorem-1-proof-4}
       \liminf_{n\rar\infty}\frac{-1}{n}\log P_e^*\left(\{p_1\rho_1^{\ox n},\ldots,p_r\rho_r^{\ox n}\}\right)
  \geq \min_{(i,j):i\neq j}\max_{0\leq s\leq 1}\left\{-\log\tr\rho_i^s\rho_j^{1-s}\right\}.
\end{split}\end{equation}

On the other hand, the optimality part, that
\begin{equation}\begin{split}
  \label{eq:theorem-1-proof-5}
       \limsup_{n\rar\infty}\frac{-1}{n}\log P_e^*\left(\{p_1\rho_1^{\ox n},\ldots,p_r\rho_r^{\ox n}\}\right)
  \leq \min_{(i,j):i\neq j}\max_{0\leq s\leq 1}\left\{-\log\tr\rho_i^s\rho_j^{1-s}\right\},
\end{split}\end{equation}
is a straightforward generalization of the $r=2$ situation~\cite{NussbaumSzkola06}; see~\cite{NussbaumSzkola11a}
for the proof. Alternatively, one can start with the one-shot bound of equation~(\ref{eq:one-shot-opti}).
Then we use the fact that equation~(\ref{eq:bounds-comparison}), when applied to the i.i.d. states
and acted by ``$\frac{-1}{n}\log$'' at both sides, becomes asymptotically an equality. Note that
this is still based on the results of Nussbaum and Szko{\l}a in~\cite{NussbaumSzkola06}.

At last, equation~(\ref{eq:theorem-1-proof-4}) and equation~(\ref{eq:theorem-1-proof-5}) together
are obviously equivalent to equation~(\ref{eq:error-exponent}) and we conclude the proof of Theorem~\ref{thm:quantum-multiple-Chernoff}.
\end{proof}

\section{Discussion}
  \label{section:discussions}
%composite hypothesis testing; optimality regarding the factor $f(r,T)$; asymptotics for
%correlated states.
By explicitly constructing a family of asymptotically optimal measurements for testing quantum
hypotheses $\{\rho_1^{\ox n},\ldots,\rho_r^{\ox n}\}$, we have proven the achievability of the
multiple quantum Chernoff distance, and eventually established that this is the optimal rate
exponent at which the error decays.

In the nonasymptotic setting, we have obtained a new upper bound for the optimal average error
probability in discriminating a set of density matrices $\{A_1,\ldots, A_r\}$, which satisfy
$A_i\geq 0$ and are not necessarily normalized. Yuen, Kennedy and Lax~\cite{YKL75} derived a
formula for the optimal average error:
\begin{equation}
  \label{eq:YKL}
  P_e^*(A_1,\ldots,A_r)=\tr\sum_iA_i-\min\left\{\tr X:X\geq A_i,\ i=1,\ldots,r\right\};
\end{equation}
see also~\cite{KRS09} and \cite{Aud-Mosonyi14} for alternative formulations. However, the fact
that equation~(\ref{eq:YKL}) involves an optimization problem itself, makes it difficult to
apply this formula directly. Our upper bound stated in Theorem~\ref{thm:one-shot-achi}, though
looser compared to equation~(\ref{eq:YKL}), has an explicit form and there is a dual lower bound
as shown in equation~(\ref{eq:one-shot-opti}). We thus hope that it will find more applications.

We wonder whether the states-dependent factor $f(r,T)$ can be replaced by a constant, or
at least can be improved such that it only depends on $r$ (see also a similar conjecture made
in~\cite{Aud-Mosonyi14}). While it is possible that we can improve Lemma~\ref{lemma:Projector-bound}
to give a better bound on $f(r,T)$, we do not think that this can remove the dependence on $T$
and $r$. In this direction, the pretty good measurement~\cite{BarnumKnill02,HausladenWootters94}
and its variant~\cite{Tyson09}, both of which achieve an error probability lying between $P_e^*$ and
$2P_e^*$, may be useful tools to try. In fact, in Theorem~\ref{thm:one-shot-achi} the dependence
of our bound on $T$ is not necessary: using the argument in~\cite{TomamichelHayashi12}, we can
convert it into a dependence on the relation between the maximal and the minimal eigenvalues of the
hypothetic states; see Proposition~\ref{prop:factor-dep-alt} below and the proof in the Appendix.
This conversion is useful when the spectrum of each $A_i$ is sufficiently flat, no matter how big
the number of their eigenspaces is.
\begin{proposition}
  \label{prop:factor-dep-alt}
For all $i=1,\ldots,r$, let $\lambda_{max}(A_i)$ be the maximal eigenvalue, and $\lambda_{min}(A_i)$
be the minimal nonzero eigenvalue of $A_i$. Denote
\[
  L:=\max\left\{\left\lfloor\log_2\frac{2\lambda_{max}(A_1)}{\lambda_{min}(A_1)}\right\rfloor,\ldots,
           \left\lfloor\log_2\frac{2\lambda_{max}(A_r)}{\lambda_{min}(A_r)}\right\rfloor\right\}.
\]
Then, in Theorem~\ref{thm:one-shot-achi} the states-dependent factor $f(r,T)$ can be replaced by
$h(r,L):=40(r-1)^2L^2$.
\end{proposition}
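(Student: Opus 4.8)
The plan is to replace the dependence on $T$ (the number of eigenspaces) by a dependence on $L$ (the logarithm of the ratio of maximal to minimal nonzero eigenvalue), following the spectral-grouping technique of~\cite{TomamichelHayashi12}. The central idea is that in the proof of Theorem~\ref{thm:one-shot-achi}, the factor $T$ enters only through Lemma~\ref{lemma:bounded-overlap}, where the overlap bound $T\epsilon$ in equation~(\ref{eq:second-term-bound-5}) arises because each $\tilde{S}_i^m$ is a direct sum of up to $T_i$ eigenspaces. If instead we coarse-grain the spectrum so that each $A_i$ has only $O(L)$ \emph{distinct eigenvalue groups} (rather than $T_i$ exact distinct eigenvalues), the same argument yields an $L$-dependent bound.

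First I would define, for each $A_i$, a grouping of its nonzero eigenvalues into dyadic bands: group the eigenvalues $\lambda_{ik}$ according to which interval $[\,2^{-(u+1)}\lambda_{max}(A_i),\,2^{-u}\lambda_{max}(A_i)\,)$ they fall in, for $u=0,1,\ldots$. Since all nonzero eigenvalues lie in $[\lambda_{min}(A_i),\lambda_{max}(A_i)]$, the number of nonempty bands is at most $\lfloor\log_2(2\lambda_{max}(A_i)/\lambda_{min}(A_i))\rfloor+1$, which is controlled by $L$. Within each band I would replace the true eigenvalues by, say, the upper endpoint of the band, defining a new nonnegative matrix $\bar A_i$ whose spectrum is the banded approximation. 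By construction $\bar A_i$ has at most $L+1$ distinct eigenvalues, so running Theorem~\ref{thm:one-shot-achi} on $\{\bar A_1,\ldots,\bar A_r\}$ gives a bound with factor $f(r,L+1)<10(r-1)^2(L+1)^2$ in place of $f(r,T)$. The banding distorts each eigenvalue by at most a factor of $2$, so $A_i\leq \bar A_i\leq 2A_i$; this two-sided comparison lets me transfer the error bound from the $\bar A_i$ back to the original $A_i$.

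The key estimates are then the following. On the one hand, $P_e^*(\{A_1,\ldots,A_r\})\leq P_e^*(\{\bar A_1,\ldots,\bar A_r\})$ up to a constant, because any measurement that is good for the $\bar A_i$ is good for the $A_i$ (using $A_i\leq\bar A_i$ in the error expression $\sum_i\tr A_i(\1-M_i)\leq\sum_i\tr\bar A_i(\1-M_i)$). On the other hand, the right-hand side of equation~(\ref{eq:one-shot-achi}) applied to the $\bar A_i$ must be compared back to the corresponding sum $\sum_{k,\ell}\min\{\lambda_{ik},\lambda_{j\ell}\}\tr Q_{ik}Q_{j\ell}$ for the original $A_i$; since $\min\{\bar\lambda_{ik},\bar\lambda_{j\ell}\}\leq 2\min\{\lambda_{ik},\lambda_{j\ell}\}$ and the band-projectors refine or coincide with the original eigenprojectors appropriately, this costs only a constant factor. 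Collecting the constants $2\cdot 2=4$ against $10$ yields the claimed $h(r,L)=40(r-1)^2L^2$.

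The main obstacle I expect is the bookkeeping in the second comparison: the eigenprojectors $\bar Q$ of $\bar A_i$ are sums of several original $Q_{ik}$ lying in the same band, so the product $\tr\bar Q_{iu}\bar Q_{jv}$ expands into a sum of cross terms $\tr Q_{ik}Q_{j\ell}$, and I must verify that pairing $\min\{\bar\lambda_{iu},\bar\lambda_{jv}\}$ with each such cross term is dominated (up to the constant factor) by the original summand $\min\{\lambda_{ik},\lambda_{j\ell}\}\tr Q_{ik}Q_{j\ell}$. The care needed is that $\min$ of the banded values can exceed $\min$ of the true values, but only by the within-band factor of at most $2$, so the inequality $\min\{\bar\lambda_{iu},\bar\lambda_{jv}\}\leq 2\min\{\lambda_{ik},\lambda_{j\ell}\}$ holds termwise whenever $\lambda_{ik}$ lies in band $u$ and $\lambda_{j\ell}$ in band $v$; this is precisely what makes the argument close. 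I would relegate these routine but slightly delicate estimates to the Appendix, as the statement indicates.
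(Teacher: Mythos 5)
Your proposal is correct and follows essentially the same route as the paper: dyadically band the spectrum of each $A_i$, round each eigenvalue up to its band's representative to get commuting approximants $A_i'$ with $A_i\le A_i'\le 2A_i$ and at most $L$ distinct eigenvalues, apply Theorem~\ref{thm:one-shot-achi} to the primed ensemble, and transfer back via $P_e^*(\{A_1,\ldots,A_r\})\le P_e^*(\{A_1',\ldots,A_r'\})$ together with the termwise bound $\min\{\lambda'_{iu},\lambda'_{jv}\}\le 2\min\{\lambda_{ik},\lambda_{j\ell}\}$ after expanding the band projectors into the original $Q_{ik}$. The only discrepancy is an off-by-one in your band count ($L+1$ rather than $L$ bands), which would yield $40(r-1)^2(L+1)^2$; the paper avoids this by anchoring the bands at $\lambda_{min}(A_i)$, taking $\cO_m=\{\lambda_k: 2^{m-1}\lambda_{min}(A_i)\le\lambda_k<2^{m}\lambda_{min}(A_i)\}$ for $m=1,\ldots,\lfloor\log_2(2\lambda_{max}(A_i)/\lambda_{min}(A_i))\rfloor$.
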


Another interesting question is how our method can be extended to deal with the problem of
discriminating correlated states, where each of the hypothetic states $\rho_1^{(n)},\ldots,\rho_r^{(n)}$
can be correlated among the $n$ subsystems. The upper bound stated in Theorem~\ref{thm:one-shot-achi}
(also in Proposition~\ref{prop:factor-dep-alt} for an alternative states-dependent factor), together
with the dual lower bound of equation~(\ref{eq:one-shot-opti}), can be used to analyse the asymptotic
behavior of the error. This method may identify the optimal error exponent which can be quite different
from reasonable generalizations of the Chernoff distance, in contrast to previous works which under
certain conditions yield the mean quantum Chernoff distance; see, for example,~\cite{HMO07, Mosonyi09, MHOF08,NussbaumSzkola10}. However, the main difficulty we will confront in this method is to characterize
the spectral decomposition of the correlated states when $n$ goes to infinity. At last, a
particularly interesting problem in this setting, proposed by Audenaert and Mosonyi~\cite{Aud-Mosonyi14},
is testing composite hypotheses, say, $\rho^{\otimes n}$ versus $\sum_iq_i\sigma_i^{\otimes n}$. Here
the sum may be replaced by an integral. See also discussions in~\cite{BDKSSS05} and ~\cite{BHOS15}
of this problem in the asymmetric case of Stein's lemma. While our method for proving Theorem~\ref{thm:quantum-multiple-Chernoff} does shed some light on this problem, it seems that a
complete solution needs further ideas.

\medskip
{\bf Acknowledgments.} The author is grateful to Charles Bennett, Aram Harrow, Graeme Smith, John Smolin
and Andreas Winter, for very helpful discussions and comments. He further thanks Fernando Brand\~{a}o,
Mil\'an Mosonyi, Marco Tomamichel and the anonymous referees for their enthusiasm and/or helpful comments.

\bigskip
\appendix
\title{Appendix A: proof of Proposition~\ref{prop:factor-dep-alt}}
\begin{proof}
For an arbitrary nonnegative matrix $A=\sum_k\lambda_kQ_k$ written in the spectral decomposition
form, define the modified version of $A$ as
\[
  A^\prime=\sum_{m=1}^M 2^m\lambda_{min}(A) \sum_{k:\lambda_k\in\cO_m}Q_k,
\]
where $M:=\left\lfloor\log_2\frac{2\lambda_{max}(A)}{\lambda_{min}(A)}\right\rfloor=\Omega(A^\prime)$
and $\cO_m:=\{\lambda_k:2^{m-1}\lambda_{min}(A)\leq\lambda_k<2^{m}\lambda_{min}(A)\}$. Then we
have $A\leq A^\prime \leq 2A$, and also $A$ and $A'$ commute. Now for $A_1,\ldots,A_r$, we define
$A_1',\ldots,A_r'$ in a similarly way as $A'$ was defined. Obviously, $\Omega(A_i^\prime)=
\left\lfloor\log_2\frac{2\lambda_{max}(A_i)}{\lambda_{min}(A_i)}\right\rfloor$. Applying
Theorem~\ref{thm:one-shot-achi}, we can evaluate
\begin{equation}\begin{split}
  \label{eq:prop-proof-1}
       & P_e^*\left(\{A_1',\ldots,A_r'\}\right) \\
  \leq & 10(r-1)^2L^2\cdot 4\sum_{(i,j):i<j}\sum_{k,\ell} \min\{\lambda_{ik},\lambda_{j\ell}\}
                                                                             \tr Q_{ik}Q_{j\ell}.
\end{split}\end{equation}
On the other hand, since for all $i$, $A_i\leq A_i'$, we have by the definition of $P_e^*$ that
\begin{equation}
  \label{eq:prop-proof-2}
  P_e^*\left(\{A_1,\ldots,A_r\}\right) \leq P_e^*\left(\{A_1',\ldots,A_r'\}\right).
\end{equation}
Equations~(\ref{eq:prop-proof-1}) and (\ref{eq:prop-proof-2}) together lead to the advertised
result.
\end{proof}

\end{document}